\documentclass[preprint,3p]{elsarticle}
\usepackage{amssymb,amsmath,enumerate,graphicx,hyperref,amsfonts,latexsym,url,wasysym,mathtools,amsthm}
\usepackage[utf8]{inputenc}
\usepackage{natbib}
%

\theoremstyle{plain}
\newtheorem{Thm}{Theorem}
\newtheorem{lemma}[Thm]{Lemma}
\newtheorem{cor}[Thm]{Corollary}
\newtheorem{prop}[Thm]{Proposition}
\theoremstyle{definition}
\newtheorem{rem}[Thm]{Remark}
\newtheorem{exm}[Thm]{Example}

\usepackage{tikz}
\usetikzlibrary{snakes,shapes}
\tikzstyle{tre}=[circle,draw,minimum size=3mm,inner sep=1pt]
\tikzstyle{trep}=[circle,draw,minimum size=2mm,inner sep=0.1pt]
\tikzstyle{trepp}=[circle,draw,minimum size=1.5mm,inner sep=0.1pt]
\newcommand{\etq}[1]{%
\draw (#1) node {\scriptsize $#1$};
}

\renewcommand{\leq}{\leqslant}
\renewcommand{\geq}{\geqslant}
\newcommand{\NN}{\mathbb{N}}

\newcommand{\red}[1]{\textcolor{red}{#1}}

\newcommand{\oC}{\overline{C}}

\newcommand{\TT}{\mathcal{T}}

\DeclarePairedDelimiter\ceil{\lceil}{\rceil}
\DeclarePairedDelimiter\floor{\lfloor}{\rfloor}

\begin{document}
\begin{frontmatter}

\title{The minimum value of the Colless index}

\author{Tom\'as M. Coronado}
\ead{t.martinez@uib.es}
\author{Francesc Rossell\'o}
\ead{cesc.rossello@uib.es}
\address{Dept. of Mathematics and Computer Science, University of the Balearic Islands, E-07122 Palma, Spain, and Balearic Islands Health Research Institute (IdISBa), E-07010 Palma, Spain}

\begin{abstract}
\red{ATTENTION: This manuscript has been subsumed by another manuscript, which can be found on arXiv: arXiv:1907.05064}\\
The Colless index is one of the oldest and most widely used balance indices for rooted bifurcating  trees.  Despite its popularity, its minimum value on the space $\mathcal{T}_n$ of rooted bifurcating  trees with $n$ leaves is only known when $n$ is a power of 2. In this paper we fill this gap in the literature, by providing a formula that computes, for each $n$, the minimum Colless index on $\mathcal{T}_n$, and characterizing those trees where this minimum value is reached.
\end{abstract}

\begin{keyword}
Phylogenetic tree\sep Colless index\sep Balance index
\end{keyword}
\end{frontmatter}

\section{Introduction}

\noindent One of the main goals of evolutionary biology is to understand what factors influence evolutionary processes and their effect. Since phylogenetic trees are the standard representation of joint evolutive histories of groups of species, it is natural to look for the imprint of these factors in the shapes of phylogenetic trees \citep{Mooers97,Stich09}. This has motivated the introduction of various indices that quantify topological features of tree shapes supposedly related to properties of the evolutionary processes represented by the trees. These indices have been   used then to test evolutionary hypothesis  \citep{Blum05,KiSl:93,Mooers97,Stam02}
and to compare tree shapes \citep{Avino18,Goloboff17}, among other applications. Since the early observation by \citet{Yule} that taxonomic trees tend to be asymmetric, with many small clades and a few large ones at every taxonomic level, the most popular topological feature used to describe the shape of a phylogenetic tree  has been its \emph{balance}, the tendency
of the children of any given node to have the same number of descendant leaves. The \emph{imbalance} of a phylogenetic tree reflects   the propensity of evolutive events to occur along specific lineages  \citep{Nelson}.

Several \emph{balance indices} have been proposed so far to quantify the balance (or actually, in most cases, the imbalance) of a phylogenetic tree; see, for instance, \citep{Colless:82,CMR,KiSl:93,MRR1,Sackin:72,Shao:90,cherries} and the section ``Measures of overall asymmetry'' in Felsenstein's book \citep{fel:04} (pp.  562--563). Among them, the Colless index, introduced by D. H. Colless \citep{Colless:82}, is one of the oldest and most popular. It is defined, on a bifurcating tree $T$, as  the sum, over all the internal nodes $v$ of $T$, of the absolute value of the difference of the numbers of descendant leaves of the pair of children of $v$. Although it was defined primarily for bifurcating phylogenetic trees,  \citet{Shao:90} proposed to extend it to multifurcating trees by  taking into account in this sum only the bifurcating  nodes; a more meaningful extension to arbitrary trees has been proposed recently by \citet{MRR2}. Anyway, in this paper we deal with the classical Colless index for bifurcating trees, and since it is a \emph{shape index}, in the sense that it does not depend on the actual labels of the tree's leaves, we shall not take into account these labels and we shall consider this index defined on unlabeled rooted bifurcating trees, or simply  \emph{bifurcating trees} for short.

By its very definition, the Colless index measures directly the total amount of imbalance of a bifurcating tree, with the notion of balance recalled above. And, indeed, for every number $n$ of leaves, it is folklore knowledge that the bifurcating tree with $n$ leaves and largest Colless index is the \emph{comb}: the bifurcating tree such that each internal node has a leaf child, considered since the early paper by \citet{Sackin:72} to be the most imbalanced type of phylogenetic tree. This fact was already hinted by \citet{Colless:82}, but he gave a wrong value for the Colless index of a comb which was later corrected by \citet{Heard92}, giving the correct maximum value of $(n-1)(n-2)/2$. Actually, to our knowledge, no explicit direct proof of the maximality of this Colless value has been provided in the literature, but it can be easily deduced as a particular case of Thm. 18 in \citep{MRR2}.

But, in spite of its popularity and wide use,  the minimum Colless index of a bifurcating tree with $n$ leaves remains unknown beyond the 
 often stated straighforward result that for numbers of leaves that are powers of 2 it is reached  at the fully symmetric bifurcating trees, which clearly have Colless index 0; see for example \citep{Heard92,KiSl:93,Mooers97}. To have a closed formula for this minimum value is necessary, for instance, in order to normalize the Colless index to the range $[0,1]$  for every number of leaves, making its value independent of its size as it is recommended by \citet{Shao:90} or \citet{Stam02}. 
 
 The goal of this paper is then to fill this gap  in the literature.
In  Sections \ref{sec:CB} and \ref{sec:min}  we prove that, for every $n$, the minimum Colles index on the space $\TT_n$ of bifurcating trees with $n$ leaves is reached at the \emph{maximally balanced  trees}, those trees such that, for every internal node $v$, the numbers of descendant leaves of its two children differ at most of 1, and we provide a formula that computes  this minimum value for each $n$ (see Proposition \ref{prop:valueC}).  This  result is not surprising, because these maximally balanced  trees are considered, as their name hints, the ``most balanced binary trees'' \citep{Shao:90}, and many other balance indices reach their smallest values at them: for instance,  \citet{Fischer19} has recently proved it for the Sackin index introduced in \citep{Sackin:72,Shao:90}, and it is also the case for the total cophenetic index \citep{MRR1} or the rooted quartet index \citep{CMR} (in fact, the   maximally balanced  trees have the largest rooted quartet index, because this index, unlike the others, grows from most imbalanced to most balanced).  Unfortunately, there are trees that are not maximally balanced but that  also yield the minimum Colless index for their number of leaves (see, for instance, Fig. \ref{fig:min6} below). Then, in Section \ref{sec:QB} we characterize the trees in $\TT_n$ having minimum Colless index and we provide an efficient algorithm that  produces them for any $n$.\smallskip

\noindent\textbf{Notations.} 
 In this paper, by a  \emph{tree} we mean a  rooted tree, understood as a directed graph with its arcs pointing away from the root. Such a tree is \emph{bifurcating}  when all its nodes have  out-degree either 0 (its \emph{leaves}) or 2 (its \emph{internal nodes}). We shall denote by $\TT_n$ the set  of (isomorphism classes of) bifurcating trees with $n$ leaves.  A \emph{cherry} is a tree consisting only of a root and two leaves.

Given a bifurcating tree $T\in \TT_n$, we shall denote by $V(T)$ its set of nodes and by $V_{int}(T)$ its set of internal nodes.  For every $u,v\in V(T)$,  $v$ is a \emph{child} of $u$ when $T$ contains the arc $(u,v)$, and $v$ is a  \emph{descendant} of $u$ when $T$ contains  a (directed) path from $u$ to $v$. 
 For every $v\in V(T)$, the \emph{subtree $T_v$ of $T$ rooted at $v$} is the subgraph of $T$ induced by the  descendants of $v$ and $\kappa_T(v)$ is the number of leaves of $T_v$.   Given two bifurcating trees $T\in \TT_{n_1}$ and $T'\in \TT_{n_2}$, their \emph{root join} is   the tree $T\star T'\in \TT_{n_1+n_2}$ whose subtrees rooted at the children of the root are $T$ and $T'$.

For every $v\in V_{int}(T)$, say with children $v_1,v_2$,  the \emph{balance value} of $v$ is $bal_T(v)=|\kappa_T(v_1)-\kappa_T(v_2)|$. We shall say that an internal node of $T$ is \emph{balanced} when its balance value is 0, and \emph{imbalanced} otherwise.  The \emph{Colless index} \citep{Colless:82}  of  $T\in \TT_n$ is the sum of the balance values of its internal nodes:
$$
C(T)=\sum_{v\in V_{int}(T)} bal_T(v).
$$
It is easy to see that the Colless index satisfies the following recurrence \citep{Rogers:93}: if $T=T_1\star T_2$, with $T_1\in \TT_{n_1}$ and $T_2\in \TT_{n_2}$, then
\begin{equation}
C(T)=C(T_1)+C(T_2)+|n_1-n_2|.
\label{eq:reccolless}
\end{equation}

\section{Colless indices of maximally balanced trees}\label{sec:CB}

\noindent An internal node $v$ in a bifurcating tree $T$ is \emph{balanced} when $bal_T(v)\leq 1$: i.~e., when  its two children have $\lceil \kappa_T(v)/2\rceil$  and $\lfloor \kappa_T(v)/2\rfloor$  descendant leaves, respectively. A tree $T\in \TT_n$ is \emph{maximally balanced} when all its internal nodes are balanced. Therefore, a tree is maximally balanced when its root is balanced and both subtrees rooted at the children of the root are maximally balanced. This easily implies, on the one hand, that, for any number $n$ of leaves, there is only one maximally balanced tree in $\TT_n$, which we shall denote by $B_n$, and, on the other hand, that  the maximally balanced trees satisfy the following recurrence:
\begin{equation}
B_n=B_{\ceil*{n/2}}\star B_{\floor*{n/2}}.
\label{eq:recB}
\end{equation}
When $n$ is a power of 2, $B_n$ is the \emph{fully symmetric bifurcating tree}, where, for each internal node, the pair of subtrees rooted at its children  are  isomorphic. Fig.~\ref{fig:bal} depicts the trees $B_6$, $B_7$ and $B_8$.   

\begin{figure}[htb]
\begin{center}
\begin{tikzpicture}[thick,>=stealth,scale=0.25]
\draw(0,0) node [tre] (1) {};  \etq 1
\draw(2,0) node [tre] (2) {};  \etq 2
\draw(4,0) node [tre] (3) {};  \etq 3
\draw(6,0) node [tre] (4) {};  \etq 4
\draw(8,0) node [tre] (5) {};  \etq 5
\draw(10,0) node [tre] (6) {};  \etq 6
\draw(1,2) node[tre] (a) {};
\draw(9,2) node[tre] (c) {};
\draw(2,4) node[tre] (b) {};
\draw(8,4) node[tre] (d) {};
\draw(5,6) node[tre] (r) {};
\draw  (a)--(1);
\draw  (a)--(2);
\draw  (b)--(a);
\draw  (b)--(3);
\draw  (c)--(5);
\draw  (c)--(6);
\draw  (d)--(4);
\draw  (d)--(c);
\draw  (r)--(b);
\draw  (r)--(d);
\end{tikzpicture}
\quad
\begin{tikzpicture}[thick,>=stealth,scale=0.25]
\draw(0,0) node [tre] (1) {};  \etq 1
\draw(2,0) node [tre] (2) {};  \etq 2
\draw(4,0) node [tre] (3) {};  \etq 3
\draw(6,0) node [tre] (4) {};  \etq 4
\draw(8,0) node [tre] (5) {};  \etq 5
\draw(10,0) node [tre] (6) {};  \etq 6
\draw(12,0) node [tre] (7) {};  \etq 7
\draw(1,2) node[tre] (a) {};
\draw(5,2) node[tre] (b) {};
\draw(3,4) node[tre] (c) {};
\draw(11,2) node[tre] (d) {};
\draw(9,4) node[tre] (e) {};
\draw(6,6) node[tre] (r) {};
\draw  (a)--(1);
\draw  (a)--(2);
\draw  (b)--(3);
\draw  (b)--(4);
\draw  (c)--(a);
\draw  (c)--(b);
\draw  (d)--(6);
\draw  (d)--(7);
\draw  (e)--(d);
\draw  (e)--(5);
\draw  (r)--(c);
\draw  (r)--(e);
\end{tikzpicture}
\quad
\begin{tikzpicture}[thick,>=stealth,scale=0.25]
\draw(0,0) node [tre] (1) {};  \etq 1
\draw(2,0) node [tre] (2) {};  \etq 2
\draw(4,0) node [tre] (3) {};  \etq 3
\draw(6,0) node [tre] (4) {};  \etq 4
\draw(8,0) node [tre] (5) {};  \etq 5
\draw(10,0) node [tre] (6) {};  \etq 6
\draw(12,0) node [tre] (7) {};  \etq 7
\draw(14,0) node [tre] (8) {};  \etq 8
\draw(1,2) node[tre] (a) {};
\draw(5,2) node[tre] (b) {};
\draw(3,4) node[tre] (c) {};
\draw(9,2) node[tre] (d) {};
\draw(13,2) node[tre] (e) {};
\draw(11,4) node[tre] (f) {};
\draw(7,6) node[tre] (r) {};
\draw  (a)--(1);
\draw  (a)--(2);
\draw  (b)--(3);
\draw  (b)--(4);
\draw  (c)--(a);
\draw  (c)--(b);
\draw  (d)--(5);
\draw  (d)--(6);
\draw  (e)--(7);
\draw  (e)--(8);
\draw  (f)--(d);
\draw  (f)--(e);
\draw  (r)--(c);
\draw  (r)--(f);
\end{tikzpicture}
\end{center}
\caption{\label{fig:bal} 
Three maximally balanced trees. The tree with 8 leaves is fully symmetric.}
\end{figure}
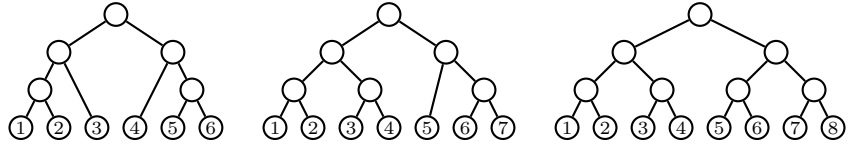

We shall denote by $C(n)$ the Colless index of $B_n$. The sequence $C(n)$ satisfies that $C(1)=0$ and, by (\ref{eq:reccolless}) and (\ref{eq:recB}), for every $n\geq 2$
\begin{equation}
C(n) =  C(\ceil*{n/2}) + C(\floor*{n/2}) + \ceil*{n/2} - \floor*{n/2}.
\label{eq:recC}
\end{equation}

The following well-known lemma is easily proved by induction on $n$, using recurrence (\ref{eq:reccolless}).

\begin{lemma}\label{lem:C=0}
For every $n\geq 1$ and $T\in \TT_n$, $C(T)=0$ if, and only if, $n$ is a power of 2 and $T=B_n$. \qed
\end{lemma}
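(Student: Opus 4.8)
The plan is to prove both implications simultaneously by induction on $n$, relying throughout on the fact that every $T\in\TT_n$ with $n\geq 2$ decomposes as $T=T_1\star T_2$ for some $T_1\in\TT_{n_1}$ and $T_2\in\TT_{n_2}$ with $n_1+n_2=n$, together with recurrence (\ref{eq:reccolless}). The crucial structural observation that drives the whole argument is that in
$$C(T)=C(T_1)+C(T_2)+|n_1-n_2|$$
all three summands are nonnegative; hence $C(T)=0$ if, and only if, each of them vanishes separately.

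For the implication ($\Leftarrow$), I would first record that $C(B_n)=0$ whenever $n=2^k$. This follows by induction from recurrence (\ref{eq:recC}): the base case $C(1)=0$ is immediate, and when $n=2^k$ with $k\geq 1$ one has $\ceil*{n/2}=\floor*{n/2}=2^{k-1}$, so the correction term $\ceil*{n/2}-\floor*{n/2}$ vanishes and $C(n)=2\,C(2^{k-1})=0$ by the inductive hypothesis.

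The substance of the lemma is the implication ($\Rightarrow$), again proved by induction on the number of leaves. The base case $n=1$ is trivial, since the only tree is a single leaf, $1=2^0$ is a power of $2$, and $B_1$ is that leaf. For the inductive step, let $n\geq 2$, write $T=T_1\star T_2$, and assume $C(T)=0$. By the nonnegativity observation above, $C(T_1)=0$, $C(T_2)=0$, and $|n_1-n_2|=0$. The last equality forces $n_1=n_2=n/2$, so $n$ is even. Applying the inductive hypothesis to $T_1$ and $T_2$, each of which has strictly fewer than $n$ leaves and Colless index $0$, yields that $n/2$ is a power of $2$ and that $T_1=T_2=B_{n/2}$. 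Consequently $n=2\cdot(n/2)$ is itself a power of $2$, and since $\ceil*{n/2}=\floor*{n/2}=n/2$ for even $n$, recurrence (\ref{eq:recB}) gives $T=B_{n/2}\star B_{n/2}=B_n$.

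I do not expect any step to present a genuine obstacle; the only point requiring care is the clean passage from ``a sum of nonnegative terms equals zero'' to ``each term equals zero'', which is precisely what makes the forward direction reduce so transparently to the inductive hypothesis applied to the two subtrees rooted at the children of the root. The remaining caveats are purely bookkeeping: one should note that the decomposition $T=T_1\star T_2$ is well defined for every non-leaf bifurcating tree, and that the induction is set up on $n$ rather than on the height of $T$.
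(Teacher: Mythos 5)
Your proof is correct and follows exactly the route the paper indicates (the paper omits the proof, saying only that the lemma ``is easily proved by induction on $n$, using recurrence (\ref{eq:reccolless})''): induction on $n$, decomposing $T=T_1\star T_2$ and using the nonnegativity of the three summands in $C(T)=C(T_1)+C(T_2)+|n_1-n_2|$ to force $n_1=n_2$ and $C(T_1)=C(T_2)=0$. No gaps; this is precisely the intended argument.
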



Nex result gives a closed formula for $C(n)$ in terms of the binary expansion of $n$.

\begin{prop}\label{prop:valueC}
If $n=\displaystyle\sum_{j=1}^\ell 2^{m_j}$ with $m_1>\cdots>m_\ell$, then
$C(n)=\displaystyle\sum_{j=2}^\ell 2^{m_j}(m_1-m_j-2(j-2)).$
\end{prop}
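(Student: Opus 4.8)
The plan is to recast $C(n)$ as a count over the levels of $B_n$ and then evaluate the resulting double sum. First I would exploit maximal balance directly: at every internal node $v$ of $B_n$ the two children carry $\ceil*{\kappa_{B_n}(v)/2}$ and $\floor*{\kappa_{B_n}(v)/2}$ leaves, so $bal_{B_n}(v)$ equals $1$ if $\kappa_{B_n}(v)$ is odd and $0$ otherwise. Consequently $C(n)$ is simply the number of internal nodes of $B_n$ whose subtree has an odd number of leaves, and my task reduces to counting these nodes.

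Next I would establish the level profile of $B_n$: writing $n=q2^d+r$ with $0\le r<2^d$, for every $0\le d\le m_1$ the tree $B_n$ has exactly $2^d$ nodes at depth $d$, of which $r$ have $q+1$ descendant leaves and $2^d-r$ have $q$ descendant leaves. I would prove this by induction on $d$ using $B_n=B_{\ceil*{n/2}}\star B_{\floor*{n/2}}$: a depth-$d$ node with $c$ leaves has children with $\ceil*{c/2}$ and $\floor*{c/2}$ leaves, and a short parity case-analysis on $q$ shows the two multiplicities transform exactly into those prescribed for depth $d+1$. Since $m_1=\floor*{\log_2 n}$, for $d\le m_1-1$ one has $q=\floor*{n/2^d}\ge 2$, so every depth-$d$ node is internal, whereas at depth $m_1$ one has $q=1$ and the nodes are cherries (balance $0$) together with leaves; hence only the levels $0\le d\le m_1-1$ contribute, and $C(n)=\sum_{d=0}^{m_1-1}c_d$, where $c_d$ is the number of depth-$d$ nodes with odd leaf-count.

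Then I would evaluate the sum. Exactly one of $q$ and $q+1$ is odd, so with $r_d=n\bmod 2^d$ and $q_d=\floor*{n/2^d}$ one gets $c_d=r_d$ when $q_d$ is even and $c_d=2^d-r_d$ when $q_d$ is odd; and $q_d$ is odd exactly when bit $d$ of $n$ is set, that is, when $d\in\{m_2,\dots,m_\ell\}$. Splitting the range $0\le d\le m_1-1$ accordingly gives
$$C(n)=\sum_{d=0}^{m_1-1}r_d+\sum_{i=2}^{\ell}\bigl(2^{m_i}-2r_{m_i}\bigr).$$
Using $r_d=\sum_{m_j<d}2^{m_j}$ and $r_{m_i}=\sum_{j>i}2^{m_j}$, interchanging the order of summation yields $\sum_{d=0}^{m_1-1}r_d=\sum_{j=2}^{\ell}2^{m_j}(m_1-1-m_j)$ and $\sum_{i=2}^{\ell}r_{m_i}=\sum_{j=2}^{\ell}2^{m_j}(j-2)$; substituting and collecting the coefficient of $2^{m_j}$ gives $m_1-m_j-2(j-2)$, which is the asserted formula.

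The step I expect to be the main obstacle is this final bookkeeping: one must pin down precisely which levels have $q_d$ odd, and it is the carry-free fact that these are exactly the lower bit positions of $n$ that makes the interchanged sums telescope into the $-2(j-2)$ term. I would favour this level argument over a direct strong induction on (\ref{eq:recC}), since the odd case of such an induction forces one to expand the binary representation of $\ceil*{n/2}=(n+1)/2$, whose carries split the argument into several awkward sub-cases that the present route avoids.
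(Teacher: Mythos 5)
Your proof is correct, but it follows a genuinely different route from the paper's. The paper defines $\oC(n)=\sum_{j=2}^\ell 2^{m_j}(m_1-m_j-2(j-2))$ and verifies by strong induction that it satisfies the recurrence $C(n)=C(\ceil*{n/2})+C(\floor*{n/2})+\ceil*{n/2}-\floor*{n/2}$; the odd case is handled by locating the index $k=\min\{j\mid m_j=\ell-j\}$ at which the carry in $(n+1)/2$ stops, exactly the complication you anticipated and chose to avoid. You instead compute $C(n)$ directly as the number of internal nodes of $B_n$ with an odd leaf count, via the level profile of $B_n$ (at depth $d$, writing $n=q_d2^d+r_d$, there are $r_d$ nodes with $q_d+1$ leaves and $2^d-r_d$ with $q_d$ leaves), and then evaluate the resulting double sum by interchanging the order of summation. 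I checked the bookkeeping: the identification of the levels with $q_d$ odd as $d\in\{m_2,\dots,m_\ell\}$, the identities $\sum_{d=0}^{m_1-1}r_d=\sum_{j=2}^{\ell}2^{m_j}(m_1-1-m_j)$ and $\sum_{i=2}^{\ell}r_{m_i}=\sum_{j=2}^{\ell}2^{m_j}(j-2)$, and the final collection of coefficients all hold (note that the $j=1$ term correctly drops out of the first sum since $m_1<d\le m_1-1$ is vacuous). What each approach buys: the paper's argument needs nothing beyond the recurrence (\ref{eq:recC}) and so stays entirely arithmetic, whereas yours requires the extra structural lemma on the level profile of $B_n$ but is more transparent; in particular it makes visible from the outset that $C(n)$ equals the number of imbalanced internal nodes of $B_n$, a fact the paper only extracts afterwards in a remark, and it avoids the carry case analysis entirely.
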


\begin{proof}
For every  $n\geq 1$, let $\oC(n)=\sum_{j=2}^\ell 2^{m_j}(m_1-m_j-2(j-2))$, where $n=\sum_{j=1}^\ell 2^{m_j}$ with $m_1>\cdots>m_\ell$. Since $\oC(1)=\oC(2^0)=0=C(1)$, to prove that $\oC(n)=C(n)$ for every $n\geq 1$ it is enough to prove that the sequence $\oC(n)$ satisfies the recurrence (\ref{eq:recC}). Now, to prove the latter, we must distinguish two cases, depending on the parity of $n$.

Assume first that $n$ is even,  i.~e.,  $m_\ell>0$. In this case, $\floor*{n/2}=\ceil*{n/2}=\sum_{j=1}^\ell 2^{m_j-1}$ with $m_1-1>\cdots>m_\ell-1$  and then
$$
\begin{array}{l}
\displaystyle \oC(\ceil*{n/2}) + \oC(\floor*{n/2}) + \ceil*{n/2} - \floor*{n/2}= 2\oC\Big(\sum_{j=1}^\ell 2^{m_j-1}\Big)\\
\qquad  \displaystyle =2\Big(\sum_{j=2}^\ell 2^{m_j-1}\big(m_1-1-(m_j-1)-2(j-2)\big)\Big)=\sum_{j=2}^\ell 2^{m_j}(m_1-m_j-2(j-2))=\oC(n).
\end{array}
$$

Assume now that $n$ is odd, i.~e., $m_\ell=0$. Let $k=\min\{j\mid m_j=\ell-j\}$ (which exists because $m_\ell=\ell-\ell$). Then, $\floor*{n/2}=\sum_{j=1}^{\ell-1} 2^{m_j-1}$, with $m_1-1>\cdots>m_{\ell-1}-1$,
and
$$
\ceil*{n/2}=\sum_{j=1}^{\ell-1} 2^{m_j-1}+1=\sum_{j=1}^{k-1} 2^{m_j-1}+\sum_{j=k}^{\ell-1} 2^{\ell-j-1}+1=\sum_{j=1}^{k-1}2^{m_j-1}+2^{\ell-k}
$$
with $m_1-1>\cdots>m_{k-1}-1>\ell-k$. 
In this case, 
$$
\begin{array}{l}
\oC(\ceil*{n/2}) + \oC(\floor*{n/2}) + \ceil*{n/2} - \floor*{n/2}\\
\qquad \displaystyle= 
\sum_{j=2}^{k-1}2^{m_j-1}\big((m_1-1)-(m_j-1)-2(j-2))+2^{\ell-k}(m_1-1-(\ell-k)-2(k-2)\big)\\
\qquad\qquad \displaystyle+
\sum_{j=2}^{\ell-1}2^{m_j-1}\big((m_1-1)-(m_j-1)-2(j-2)\big)+1\\
\qquad \displaystyle= 
\sum_{j=2}^{k-1}2^{m_j-1}(m_1 -m_j-2(k-2))+2^{m_k}(m_1-m_k-2(k-2))-2^{\ell-k}\\
\qquad\qquad \displaystyle+
\sum_{j=2}^{k-1}2^{m_j-1}(m_1-m_j-2(j-2))+\sum_{j=k}^{\ell-1} 2^{\ell-j-1}(m_1-(\ell-j)-2(j-2))+1\\
\quad\mbox{(because $m_j=\ell-j$ for every $j\geq k$)}\\
\qquad \displaystyle= \sum_{j=2}^{k}2^{m_j}(m_1-m_j-2(j-2))+\sum_{j=k}^{\ell-1}2^{\ell-j-1}(m_1-(\ell-j)-2(j-2))+1-2^{\ell-k}\\
\qquad \displaystyle= \sum_{j=2}^{k}2^{m_j}(m_1-m_j-2(j-2))+
\sum_{i=k+1}^{\ell}2^{\ell-i}(m_1-(\ell-i)-2(i-2)+1)+1-2^{\ell-k}\\
\qquad \displaystyle= \sum_{j=2}^{k}2^{m_j}(m_1-m_j-2(j-2))+
\sum_{i=k+1}^{\ell}2^{m_i}(m_1-m_i-2(i-2))+\sum_{i=k+1}^{\ell}2^{\ell-i} +1-2^{\ell-k}\\
\qquad \displaystyle= \sum_{j=2}^{\ell}2^{m_j}(m_1-m_j-2(j-2))=\oC(n)
\end{array}
$$
This completes the proof that $\oC(n)$ satisfies (\ref{eq:recC}), and hence that $C(n)=\oC(n)$ for every $n\geq 1$.
\end{proof}

\begin{rem}
Since the balance of every internal node in a maximally balanced tree $B_n$ is at most 1, $C(n)$ is equal to the number of imbalanced nodes in $B_n$. Now, an internal node in $B_n$ is balanced if, and only if, the subtrees rooted at their children are isomorphic: that is,  with the notations in \citep{Ford}, if, and only if, it is a \emph{symmetric branch point}.
Therefore, the number of symmetric branch points in $B_n$ is $n-1-C(n)$, which implies, by Lemma 31 in \citep{Ford}, that the number of automorphisms of $B_n$ is $2^{n-1-C(n)}$. This says that the sequence $C(n)$ is equal to the sequence A296062 in Sloane's \textsl{On-Line Encyclopedia of Integer Sequences}  \citep{Sloane}.
So, Proposition \ref{prop:valueC} gives an explicit formula for that sequence.
\end{rem}

\section{The minimum Colless index}\label{sec:min}

\noindent In this section we prove that $C(n)$ is the minimum value of the Colless index on $\TT_n$.

\begin{lemma}\label{lem:key}
For every $(n, s)\in\NN^2$ with $n\geq 1$,
$C(n + s) + C(n) + s \geq C(2n + s)$.
\end{lemma}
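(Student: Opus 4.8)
\emph{The plan.} The idea is to run an induction on $N=2n+s$ built directly on the recurrence (\ref{eq:recC}). It helps to read the inequality tree-theoretically: by (\ref{eq:reccolless}) and (\ref{eq:recB}) the left-hand side equals $C(B_{n+s}\star B_n)$, the Colless index of the tree in $\mathcal{T}_{2n+s}$ obtained by joining $B_{n+s}$ and $B_n$ at a new root, whereas $C(2n+s)=C(B_{2n+s})$. Thus the lemma says that this particular tree has Colless index at least that of $B_{2n+s}$ --- precisely the inequality one needs in order to show, by induction, that $B_{2n+s}$ minimises the Colless index on $\mathcal{T}_{2n+s}$ (the goal of Section~\ref{sec:min}). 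Throughout I would write $f(n,s)=C(n)+C(n+s)+s-C(2n+s)$ and aim to prove $f(n,s)\geq 0$.

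\emph{Base cases.} First I would settle two boundary situations. If $s=0$, then (\ref{eq:recC}) gives $C(2n)=2C(n)$, so $f(n,0)=0$. If $n=1$, the claim reduces to $C(s+2)\leq C(s+1)+s$, that is, to the increment bound $C(m+1)-C(m)\leq m-1$ (with $m=s+1\geq 1$); I would prove this auxiliary bound by a short separate induction on $m$, splitting on the parity of $m$ and applying (\ref{eq:recC}) to express the increment at $m$ through the strictly smaller increment at $\lfloor m/2\rfloor$.

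\emph{Inductive step.} For $n\geq 2$ and $s\geq 1$, expand $C(n)$, $C(n+s)$ and $C(2n+s)=C\big(n+\lceil s/2\rceil\big)+C\big(n+\lfloor s/2\rfloor\big)+\big(\lceil s/2\rceil-\lfloor s/2\rfloor\big)$ by (\ref{eq:recC}) and sort according to the parities of $n$ and $s$. In each of the four cases the three expansions recombine into a nonnegative combination of values of $f$ at arguments of strictly smaller $N$: when $n,s$ are both even, $f(n,s)=2f(n/2,s/2)$; when $n$ is even and $s$ odd, $f(n,s)=f\big(n/2,\lfloor s/2\rfloor\big)+f\big(n/2,\lfloor s/2\rfloor+1\big)$; when $n$ is odd and $s$ even, $f(n,s)=f\big(\lfloor n/2\rfloor,s/2+1\big)+f\big(\lceil n/2\rceil,s/2-1\big)+2$; and when $n,s$ are both odd, $f(n,s)=f\big(\lfloor n/2\rfloor,(s+1)/2\big)+f\big(\lceil n/2\rceil,(s-1)/2\big)$. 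In every case the second arguments are $\geq 0$ and the first arguments are $\geq 1$ (this is exactly why $n=1$ and $s=0$ had to be peeled off as base cases), and $N$ strictly decreases, so the inductive hypothesis makes each summand nonnegative and hence $f(n,s)\geq 0$.

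\emph{Main obstacle.} I expect the real work to be the odd-$n$ cases, where the neat self-similarity of the even case fails: halving $2n+s$ no longer lines up with halving $n$ and $n+s$, so one is forced to pair a term built from $\lfloor n/2\rfloor$ with one built from $\lceil n/2\rceil$ at shifted second arguments. Discovering the correct pairing is the crux; in the odd-even case it even leaves an extra additive $+2$ (which incidentally shows the inequality is normally strict). Beyond finding these recombinations, the remaining effort is purely the bookkeeping of checking, case by case, that the sub-arguments stay in range and that $N$ genuinely drops; the underlying algebra is just routine expansion of (\ref{eq:recC}).
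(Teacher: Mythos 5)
Your proposal is correct and follows essentially the same route as the paper: the four parity-based recombinations of $f(n,s)=C(n)+C(n+s)+s-C(2n+s)$ you write down (including the extra $+2$ in the odd-$n$/even-$s$ case) are exactly the identities the paper uses, and your base cases $s=0$ and $n=1$ (the latter via the increment bound $C(m+1)-C(m)\leq m-1$) cover the same ground as the paper's inner induction on $s$. The only cosmetic difference is that you induct on $2n+s$ where the paper uses strong induction on $n$ with $s$ universally quantified; both orderings make all the sub-arguments legitimately smaller.
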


\begin{proof}
We shall prove by induction on $n$ that, for every $n\geq 1$,  the inequality
\begin{equation}
C(n + s) + C(n) + s \geq C(2n + s)
\label{eq:goal}
\end{equation}
holds for every $s\geq 0$. Since $C(1)=0$, the base case $n=1$ says that, for every $s\geq 0$,
\begin{equation}
C(1+s) + s \geq C(2+s).
\label{eq:base1}
\end{equation}
We prove it by induction on $s$.
The cases  $s=0$ and $s=1$ are obviously true, because  $C(1) + 0 = 0 = C(2)$ and $C(2) + 1 = 1 = C(3)$. Let us consider now the case $s\geq 2$ and let us assume that $C(1+s') +  s'  \geq C(2+s')$ for every $s'< s$. 
To prove the induction step, we distinguish two cases.
\begin{itemize}
\item If $s\in2\NN$, say $s=2t$ with $t\geq 1$, then
$C(1+s) + s=C(2t+1)+2t=C(t+1)+C(t)+1+2t$ and
$C(2+s)=C(2t+2)=2C(t+1)$
and the desired inequality (\ref{eq:base1}) holds because, by the induction hypothesis,
$C(t)+2t+1=C(1+(t-1))+(t-1)+t+2 \geq C(2+(t-1))+t+2 > C(t+1)$.

\item If $s\notin2\NN$, say $s=2t+1$ with $t\geq 1$, then
$C(1+s) + s=C(2t+2)+2t+1=2C(t+1)+2t+1$ and
$C(2+s)=C(2t+3)=C(t+2)+C(t+1)+1$ and the desired inequality (\ref{eq:base1}) holds because, by the induction hypothesis,
$C(t+1)+2t\geq C(t+2)+t> C(t+2)$.
\end{itemize}

This completes the proof of the base case $n=1$.
Let us consider now the case $n\geq 2$ and let us assume that $C(n' + s) + C(n') + s \geq C(2n' + s)$ for every $1\leq n' < n$ and $s\in\NN$. To prove that (\ref{eq:goal}) is true for every $s\in\NN$ we distinguish 4 cases.

\begin{itemize}
\item If $n\in 2\NN$ and $s\in 2\NN$, say, $n=2m$ and $s=2t$, then
$$
\begin{array}{l}
C(n+s) +C(n)+ s=C(2m+2t)+C(2m)+2t=2(C(m+t)+C(m)+t)\\ 
C(2n+s)=C(4m+2t)=2C(2m+t)
\end{array}
$$
and the desired inequality (\ref{eq:goal}) is true because, by induction,
$C(m+t)+C(m)+t\geq C(2m+t)$.

\item If $n\in2\NN$ and $s\notin2\NN$, say, $n=2m$ and $s=2t+1$, then
$$
\begin{array}{l}
C(n+s) +C(n)+ s=C(2m+2t+1)+C(2m)+2t+1\\
\qquad\qquad=C(m+t+1)+C(m+t)+1+2C(m)+2t+1\\ 
C(2n+s)=C(4m+2t+1)=C(2m+t+1)+C(2m+t)+1
\end{array}
$$
and  (\ref{eq:goal}) holds because, by induction, 
$C(m+t+1)+C(m)+t+1\geq C(2m+t+1)$ and $C(m+t)+C(m)+t \geq C(2m+t)$

\item If $n\notin2\NN$ and $s\notin2\NN$, say, $n=2m+1$ and $s=2t+1$, then
$$
\begin{array}{l}
C(n+s) +C(n)+ s=C(2m+2t+2)+C(2m+1)+2t+1\\
\qquad\qquad=2C(m+t+1)+C(m+1)+C(m)+1+2t+1\\ 
C(2n+s)=C(4m+2t+3)=C(2m+t+2)+C(2m+t+1)+1
\end{array}
$$
and (\ref{eq:goal}) is true because, by induction, 
$C(m+t+1)+C(m+1)+t \geq C(2m+2+t)$ and
$C(m+t+1)+C(m)+t+1\geq C(2m+t+1)$.

\item Assume finally that $n\notin2\NN$ and $s\in2\NN$: say, $n=2m+1$ and $s=2t$. 
If $t=0$, then the desired inequality (\ref{eq:goal}) amounts to $C(n)+C(n)\geq C(2n)$, which is true because it is actually an equality. So, assume that $t\geq 1$.
In this case, 
$$
\begin{array}{l}
C(n+s) +C(n)+ s=C(2m+2t+1)+C(2m+1)+2t\\
\qquad\qquad=C(m+t+1)+C(m+t)+1+C(m+1)+C(m)+1+2t\\ 
C(2n+s)=C(4m+2+2t)=2C(2m+t+1)
\end{array}
$$
and the (\ref{eq:goal}) holds because, by induction, 
$C(m+t+1)+C(m)+t+1\geq C(2m+t+1)$
and
$$
\begin{array}{l}
C(m+t)+C(m+1)+t+1 =C((m+1)+(t-1))+C(m+1)+t-1+2\\ \qquad\qquad \geq C(2(m+1)+t-1)+2>C(2m+t+1)
\end{array}
$$
\end{itemize}
This completes the proof of the inductive step.
\end{proof}

\begin{rem}\label{rem:postkey}
Notice that in the proof of Lemma \ref{lem:key} we have proved that the following two facts:
\begin{enumerate}[(a)]
\item $C(1+s)+C(1)+s=C(2+s)$ if, and only if, $s\leq 1$.
\item If $s\geq 2$ is even and $n$ is odd, then $C(n+s)+C(n)+s>C(2n+s)$.
\end{enumerate}
\end{rem}

\begin{Thm}\label{thm:minC}
For every $n\geq 1$, the minimum Colless index on $\TT_n$ is $C(n)$.
\end{Thm}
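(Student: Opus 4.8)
The plan is to prove by induction on $n$ that $C(T)\geq C(n)$ for every $T\in\TT_n$; since $B_n\in\TT_n$ satisfies $C(B_n)=C(n)$ by definition of $C(n)$, this immediately yields that $C(n)$ is the minimum value of the Colless index on $\TT_n$, attained at $B_n$. The base case $n=1$ is trivial, as $\TT_1$ contains only the single-node tree, whose Colless index is $0=C(1)$.

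For the inductive step, I would fix $n\geq 2$ and assume the claim holds for all smaller numbers of leaves. Any $T\in\TT_n$ can be written as a root join $T=T_1\star T_2$ with $T_i\in\TT_{n_i}$ and $n_1+n_2=n$; by the symmetry of this construction we may assume $n_1\geq n_2\geq 1$. The recurrence (\ref{eq:reccolless}), together with the induction hypothesis applied to $T_1$ and $T_2$, then gives
$$C(T)=C(T_1)+C(T_2)+(n_1-n_2)\geq C(n_1)+C(n_2)+(n_1-n_2).$$
It thus remains only to bound the right-hand side below by $C(n)$, uniformly over all admissible splits $n=n_1+n_2$.

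The key observation is that this is exactly what Lemma \ref{lem:key} provides. Setting $s=n_1-n_2\geq 0$, we have $n_1=n_2+s$ and $n=n_1+n_2=2n_2+s$, so Lemma \ref{lem:key} (with its parameter $n$ taken to be $n_2\geq 1$) reads
$$C(n_2+s)+C(n_2)+s\geq C(2n_2+s),$$
which is precisely $C(n_1)+C(n_2)+(n_1-n_2)\geq C(n)$. Chaining this with the previous inequality yields $C(T)\geq C(n)$, completing the induction.

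In effect, the real work is entirely front-loaded into Lemma \ref{lem:key}: once that inequality is available, the theorem follows by a one-line induction, the only subtlety being to recognize that the quantity $C(n_1)+C(n_2)+|n_1-n_2|$ produced by the recurrence matches the left-hand side of the lemma under the substitution $s=n_1-n_2$ and $n\rightsquigarrow n_2$. I expect no further obstacle at the level of the theorem itself; the entire difficulty resides in the delicate nested induction (on $n$, with an inner induction on $s$ for the base case) that underlies Lemma \ref{lem:key}.
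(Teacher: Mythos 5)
Your proof is correct and follows essentially the same route as the paper: induction on $n$, decomposing $T$ as a root join, applying the induction hypothesis to the two subtrees, and invoking Lemma \ref{lem:key} with $s=n_1-n_2$ and the lemma's parameter $n$ set to $n_2$. No differences worth noting.
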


\begin{proof}
We shall prove  by induction on $n$ that $C(T)\geq C(n)$ for every $T\in \TT_n$. The case when $n=1$ is obvious, because $\TT_1=\{B_1\}$. Assume now that the assertion is true for every number of leaves smaller than $n$ and let $T\in \TT_n$. Let $T_1\in\TT_{n_1}$ and $T_2\in \TT_{n_2}$ be its subtrees rooted at the children of its root and assume that $n_1\geq n_2$. Then
$$
C(T) = C(T_1) + C(T_2) + n_1-n_2 \geq  C(n_1) + C(n_2) + n_1-n_2 \geq C(n_1+n_2)=C(n)
$$
where the first inequality holds by the induction hypothesis and the second inequality by the previous lemma (taking in it $s=n_1-n_2$).
\end{proof}

This theorem says that the minimum Colless index on $\TT_n$ is reached at the maximally balanced bifurcating trees. When $n$ is a power of 2, Lemma \ref{lem:C=0} guarantees that this minimum is reached exactly at these trees, which in this case are the fully symmetric bifurcating trees. But for arbitrary values of $n$ there may be other trees whose Colles index is $C(n)$. For instance, the minimum Colless index on $\TT_6$, $C(6)=2$, is reached at the trees $B_6$ and $B_2\star B_4$ depicted in Figure \ref{fig:min6}. 

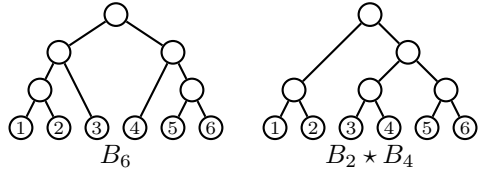
\begin{figure}[htb]
\begin{center}
\begin{tikzpicture}[thick,>=stealth,scale=0.25]
\draw(0,0) node [tre] (1) {};  \etq 1
\draw(2,0) node [tre] (2) {};  \etq 2
\draw(4,0) node [tre] (3) {};  \etq 3
\draw(6,0) node [tre] (4) {};  \etq 4
\draw(8,0) node [tre] (5) {};  \etq 5
\draw(10,0) node [tre] (6) {};  \etq 6
\draw(1,2) node[tre] (a) {};
\draw(9,2) node[tre] (c) {};
\draw(2,4) node[tre] (b) {};
\draw(8,4) node[tre] (d) {};
\draw(5,6) node[tre] (r) {};
\draw  (a)--(1);
\draw  (a)--(2);
\draw  (b)--(a);
\draw  (b)--(3);
\draw  (c)--(5);
\draw  (c)--(6);
\draw  (d)--(4);
\draw  (d)--(c);
\draw  (r)--(b);
\draw  (r)--(d);
\draw (5,-1.5) node {$B_6$};
\end{tikzpicture}
\quad
\begin{tikzpicture}[thick,>=stealth,scale=0.25]
\draw(0,0) node [tre] (1) {};  \etq 1
\draw(2,0) node [tre] (2) {};  \etq 2
\draw(4,0) node [tre] (3) {};  \etq 3
\draw(6,0) node [tre] (4) {};  \etq 4
\draw(8,0) node [tre] (5) {};  \etq 5
\draw(10,0) node [tre] (6) {};  \etq 6
\draw(1,2) node[tre] (a) {};
\draw(5,2) node[tre] (b) {};
\draw(9,2) node[tre] (c) {};
\draw(7,4) node[tre] (d) {};
\draw(5,6) node[tre] (r) {};
\draw  (a)--(1);
\draw  (a)--(2);
\draw  (b)--(4);
\draw  (b)--(3);
\draw  (c)--(5);
\draw  (c)--(6);
\draw  (d)--(b);
\draw  (d)--(c);
\draw  (r)--(a);
\draw  (r)--(d);
\draw (5,-1.5) node {$B_2\star B_4$};
\end{tikzpicture}
\end{center}
\caption{\label{fig:min6} 
The trees in $\TT_6$ with minimum Colless index.}
\end{figure}

\section{Which trees have Colless index $C(n)$?}\label{sec:QB}

\noindent In this section we provide a way of generating all bifurcating trees with minimum Colless index among all bifurcating trees with their number of leaves.

\begin{lemma}\label{lem:charmin1}
Let $T\in \TT_n$. Then, $C(T)=C(n)$ if, and only if, for every $v\in V_{int}(T)$, say with children $v_1,v_2$, 
$C(\kappa_T(v_1))+C(\kappa_T(v_2))+|\kappa_T(v_1)-\kappa_T(v_2)|=C(\kappa_T(v))$.
\end{lemma}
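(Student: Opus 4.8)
The plan is to prove the biconditional by induction on $n$, localizing everything at the split of $T$ at its root. The base case $n=1$ is trivial: $T$ is a single leaf, $C(T)=0=C(1)$, and the condition quantifies over the empty set of internal nodes, so both sides of the equivalence hold vacuously. For $n\geq 2$ I would write $T=T_1\star T_2$ with $T_i\in\TT_{n_i}$ and, without loss of generality, $n_1\geq n_2$, so that the root is an internal node whose children carry $n_1$ and $n_2$ descendant leaves.

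The heart of the argument is the chain of (in)equalities
$$
C(T)=C(T_1)+C(T_2)+(n_1-n_2)\geq C(n_1)+C(n_2)+(n_1-n_2)\geq C(n_1+n_2)=C(n),
$$
where the first equality is the Colless recurrence (\ref{eq:reccolless}), the first inequality is Theorem \ref{thm:minC} applied separately to $T_1$ and $T_2$ (each $C(T_i)\geq C(n_i)$), and the second inequality is Lemma \ref{lem:key} read with $(n,s)=(n_2,\,n_1-n_2)$, which rewrites as $C(n_1)+C(n_2)+|n_1-n_2|\geq C(n_1+n_2)$. Since the extreme terms both equal $C(n)$, the statement $C(T)=C(n)$ is equivalent to requiring that both intermediate inequalities be equalities.

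I would then unpack these two equalities. Because $C(T_1)\geq C(n_1)$ and $C(T_2)\geq C(n_2)$ hold term by term, the first inequality is an equality precisely when $C(T_1)=C(n_1)$ and $C(T_2)=C(n_2)$ simultaneously; by the induction hypothesis these say exactly that the local identity holds at every internal node of $T_1$ and of $T_2$. The second inequality is an equality precisely when $C(n_1)+C(n_2)+|n_1-n_2|=C(n_1+n_2)$, which is the local identity at the root itself (recall $\kappa_T(v_i)=n_i$ and $\kappa_T(\text{root})=n_1+n_2=n$). Since the internal nodes of $T_1$, those of $T_2$, and the root together exhaust $V_{int}(T)$, we conclude that $C(T)=C(n)$ is equivalent to the local identity holding at all internal nodes of $T$, closing the induction.

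The argument is essentially bookkeeping once this chain is in place; the only point requiring care is the elementary but crucial observation that an inequality between sums, $C(T_1)+C(T_2)\geq C(n_1)+C(n_2)$, forces equality in each summand only because the two term-wise inequalities $C(T_i)\geq C(n_i)$ are already known (from Theorem \ref{thm:minC}). The main conceptual step is recognizing that Lemma \ref{lem:key} supplies exactly the inequality $C(n_1)+C(n_2)+|n_1-n_2|\geq C(n_1+n_2)$ needed to make the chain close, so that $C(T)=C(n)$ can hold if, and only if, no slack is lost at any node.
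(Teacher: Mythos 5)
Your proof is correct, but it takes a cleaner and somewhat different route from the paper's. The paper splits the biconditional: for the ``only if'' direction it argues by tree surgery, replacing the subtree rooted at an offending node $v$ by the maximally balanced tree $B_{\kappa_T(v)}$ and showing via Lemma \ref{lem:key} and Theorem \ref{thm:minC} that the original tree then has strictly larger Colless index than the modified one, hence cannot be minimal; only the ``if'' direction is done by induction. You instead run a single induction on $n$ that handles both directions at once, sandwiching $C(T)$ between $C(T_1)+C(T_2)+(n_1-n_2)$ and $C(n)$ and observing that $C(T)=C(n)$ holds exactly when there is no slack in either inequality --- with the termwise bounds $C(T_i)\geq C(n_i)$ from Theorem \ref{thm:minC} guaranteeing that equality of the sums forces equality of each summand, so the induction hypothesis applies to $T_1$ and $T_2$ and the root contributes the remaining local identity. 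Both arguments rest on the same two ingredients (Lemma \ref{lem:key} and Theorem \ref{thm:minC}); yours buys a more uniform and arguably shorter proof by avoiding the surgery construction, while the paper's surgery argument has the minor virtue of exhibiting explicitly a strictly better tree whenever the local identity fails. The only bookkeeping point you elide, which the paper spells out, is that $\kappa_{T_i}(v)=\kappa_T(v)$ for internal nodes $v$ of $T_i$, so the local identity inside the subtree coincides with the local identity inside $T$; this is immediate and does not affect correctness.
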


\begin{proof}
$\Longrightarrow$) Assume that there exists some $v\in V_{int}(T)$, with children $v_1,v_2$, such that
$C(\kappa_T(v_1))+C(\kappa_T(v_2))+|\kappa_T(v_1)-\kappa_T(v_2)|\neq C(\kappa_T(v))$. By Lemma \ref{lem:key}, this implies that 
$C(\kappa_T(v_1))+C(\kappa_T(v_2))+|\kappa_T(v_1)-\kappa_T(v_2)|> C(\kappa_T(v))$. Let
$T'\in \TT_n$ be the tree obtained by replacing in $T$ the rooted subtree $T_v$ by a maximally balanced tree $B_{\kappa_T(v)}$ and leaving the rest of $T$ untouched. In this way, $T'_v=B_{\kappa_T(v)}$ and $bal_T(x)=bal_{T'}(x)$ for every  $x\in V_{int}(T)\setminus V_{int}(T_v)=V_{int}(T')\setminus V_{int}(T'_v)$; let us denote by $W$ this last set of nodes. Then
\begin{align*}
C(T) & =\sum_{x\in W} bal_T(x)+C(T_v) = \sum_{x\in W} bal_T(x)+C(T_{v_1})+C(T_{v_2})+|\kappa_T(v_1)-\kappa_T(v_2)|\\
 & \geq \sum_{x\in W} bal_T(x)+C(\kappa_T(v_1))+C(\kappa_T(v_2))+|\kappa_T(v_1)-\kappa_T(v_2)|\\
 & >\sum_{x\in W} bal_T(x)+C(\kappa_T(v))=\sum_{x\in W} bal_{T'}(x)+C(T'_v)=C(T')\geq C(n)
\end{align*}
 This proves the ``only if'' implication. 
 
\noindent $\Longleftarrow$)  We prove the ``if'' implication by induction on $n$. The case when $n=1$ is obvious, because $\TT_1=\{B_1\}$. Assume now that this implication is true for every tree in $\TT_{n'}$ with $n'<n$, and let $T\in \TT_n$
be such that, for every $v\in V_{int}(T)$, 
$C(\kappa_T(v_1))+C(\kappa_T(v_2))+|\kappa_T(v_1)-\kappa_T(v_2)|=C(\kappa_T(v))$, where $v_1,v_2$ stand for the children of $v$. 
Let $r$ be the root of $T$ and $x_1,x_2$ its two children.
Then, for every $v\in V_{int}(T_{x_1})$, with children $v_1,v_2$, 
$$
\begin{array}{l}
C(\kappa_{T_{x_1}}(v_1))+C(\kappa_{T_{x_1}}(v_2))+|\kappa_{T_{x_1}}(v_1)-\kappa_{T_{x_1}}(v_2)|\\
\quad =C(\kappa_{T}(v_1))+C(\kappa_{T}(v_2))+|\kappa_{T}(v_1)-\kappa_{T}(v_2)|=C(\kappa_{T}(v))=C(\kappa_{T_{x_1}}(v)).
\end{array}
$$
This implies, by the induction hypothesis, that $C(T_{x_1})=C(\kappa_T(x_1))$. 
By symmetry, we also have that $C(T_{x_2})=C(\kappa_T(x_2))$.
Finally,  
\begin{align*}
C(T) & =C(T_{x_1})+C(T_{x_2})+|\kappa_T(x_1)-\kappa_T(x_2)|\\ & =
C(\kappa_T(x_1))+C(\kappa_T(x_2))+|\kappa_T(x_1)-\kappa_T(x_2)| =C(\kappa_T(r))=C(n)
\end{align*}
as we wanted to prove.
\end{proof}

This lemma, together with Lemma \ref{lem:C=0}, provide the following algorithm to produce all trees $T\in \TT_n$ such that $C(T)=C(n)$. In it, and henceforth, for every $n\geq 2$, let
$$
QB(n)=\big\{ (n_1,n_2)\in \NN^2\mid  1\leq n_2\leq n_1,\ n_1+n_2=n, C(n_1)+C(n_2)+n_1-n_2=C(n)\big\}.
$$ 
Notice that $QB(n)\neq \emptyset$, because $(\ceil*{n/2},\floor*{n/2})\in QB(n)$.\medskip

\noindent\textbf{Algorithm 1.}

\begin{enumerate}
\item[1)] Start with a single node labeled $n$.
\item[2)] While  the current tree contains labeled leaves, choose a leaf with label $m$.
\begin{itemize}
\item[2.a)] If $m$ is a power of 2, replace this leaf by a fully bifurcating tree $B_m$ with its leaves unlabeled.
\item[2.b)] If $m$ is a not a power of 2:
\begin{itemize}
\item[2.b.i)] Find a pair of integers $(m_1,m_2)\in QB(m)$.
\item[2.b.ii)] Split the leaf labeled $m$ into a cherry with leaves labeled $m_1$ and $m_2$, respectively.
\end{itemize}
\end{itemize}
\end{enumerate}

Next result provides  a characterization of the pairs $(n_1,n_2)\in QB(n)$  which will allow to perform efficiently step (2.b.i) in this algorithm.

\begin{prop}\label{prop:eqC}
For every $n\geq 2$ and for every $1\leq n_2\leq n_1$ such that $n_1+n_2=n$:
\begin{enumerate}[(1)]

\item If $n_1=n_2=n/2$, then $(n_1,n_2)\in QB(n)$ always.

\item If $n_1>n_2$, then $(n_1,n_2)\in QB(n)$ if, and only if, there exist $k\geq 0$, $l\geq 1$, $p\geq 1$, and  $0\leq t<2^{l-2}$ such that one of the following three conditions holds:
\begin{itemize}
\item There exist $k\geq 0$ and $p\geq 1$ such that $n=2^k(2p+1)$, $n_1=2^k(p+1)$ and $n_2=2^kp$.

\item There exist $k\geq 0$, $l\geq 2$, $p\geq 1$, and  $0\leq t<2^{l-2}$ such that $n=2^k(2^l(2p+1)-(2t+1))$, $n_1=2^k(2^l(p+1)-(2t+1))$, and $n_2=2^{k+l}p$.

\item There exist $k\geq 0$, $l\geq 2$, $p\geq 1$, and  $0\leq t<2^{l-2}$ such that $n=2^k(2^l(2p+1)+2t+1)$, $n_1=2^{k+l}(p+1)$, and $n_2=2^{k}(2^lp+2t+1)$. 
\end{itemize}
\end{enumerate}
\end{prop}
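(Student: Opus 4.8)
The plan is to reduce to odd $n$ and then exploit a self-similar decomposition of the \emph{defect}
$$
\delta(n_1,n_2)=C(n_1)+C(n_2)+(n_1-n_2)-C(n_1+n_2),
$$
which is nonnegative by Lemma~\ref{lem:key} (with $n_2$ and $s=n_1-n_2$ in the roles of $n$ and $s$), so that $(n_1,n_2)\in QB(n)$ precisely when $\delta(n_1,n_2)=0$.

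First I would record two immediate consequences of recurrence~(\ref{eq:recC}): the scaling identity $C(2m)=2C(m)$ and $C(2m+1)=C(m+1)+C(m)+1$. Part~(1) is then trivial, since for even $n$ we get $2C(n/2)=C(n)$, i.e.\ $\delta(n/2,n/2)=0$. The scaling identity also gives $\delta(2a,2b)=2\delta(a,b)$, hence $(2a,2b)\in QB$ iff $(a,b)\in QB$; iterating strips off the common factor $2^k$ appearing in all three forms and reduces the question to pairs that are not both even. Among those, a pair $n_1>n_2$ with both odd has $s=n_1-n_2\geq 2$ even and $n_2$ odd, so Remark~\ref{rem:postkey}(b) forces $\delta>0$. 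Therefore a reduced pair with $n_1>n_2$ has $n_1,n_2$ of opposite parity and $n=n_1+n_2$ odd, and it remains to characterise these.

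The heart of the argument is a decomposition valid for odd $n$. Writing a pair with $n_1$ odd and $n_2$ even as $(2\alpha+1,2\beta)$ (so $\alpha\geq\beta$), recurrence~(\ref{eq:recC}) gives after regrouping
$$
\delta(2\alpha+1,2\beta)=\delta(\alpha+1,\beta)+\delta(\alpha,\beta),
$$
and, symmetrically, writing a pair with $n_1$ even and $n_2$ odd as $(2\alpha,2\beta+1)$ (so $\alpha\geq\beta+1$),
$$
\delta(2\alpha,2\beta+1)=\delta(\alpha,\beta+1)+\delta(\alpha,\beta).
$$
As every summand is nonnegative, $\delta(n_1,n_2)=0$ if and only if the two half-size defects vanish; equivalently, $(n_1,n_2)\in QB(n)$ iff the two associated pairs, of totals $\lfloor n/2\rfloor$ and $\lceil n/2\rceil$, both lie in $QB$. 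This converts membership into a genuine descent on $n$.

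To conclude I would run this descent by strong induction on $n$, showing that the reduced (odd-total) members of $QB$ are exactly the $k=0$ instances of the three listed forms: the balanced difference-one split, and the two families of the second and third forms. In one direction, given $(n_1,n_2)\in QB(n)$ I would apply the relevant decomposition, feed the two half-size pairs to the induction hypothesis, and recombine; in the other direction I would show that each listed form splits, under the decomposition, into two listed forms of smaller total. The main obstacle I anticipate is exactly this bookkeeping: the two half-size pairs produced need not be of the same type as their parent, the parity of $t$ deciding which half carries the second or third form and which collapses to a balanced (difference-one) split, so one must track how $(l,t,p)$ transform under halving and check that the index constraints $l\geq 2$ and $0\leq t<2^{l-2}$ map correctly onto level $l-1$ and the new parameter. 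As an independent check on sufficiency, each of the three forms can instead be inserted directly into the closed formula of Proposition~\ref{prop:valueC}, reducing the equality $C(n_1)+C(n_2)+(n_1-n_2)=C(n)$ to a finite computation.
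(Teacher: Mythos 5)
Your skeleton is essentially the paper's own. The nonnegativity of your defect $\delta$ is Lemma \ref{lem:key}; the scaling reduction $\delta(2a,2b)=2\delta(a,b)$ is Lemma \ref{lem:eqeven}; the exclusion of pairs with both entries odd is Remark \ref{rem:postkey}(b); and your two identities $\delta(2\alpha+1,2\beta)=\delta(\alpha+1,\beta)+\delta(\alpha,\beta)$ and $\delta(2\alpha,2\beta+1)=\delta(\alpha,\beta+1)+\delta(\alpha,\beta)$ are exactly the splittings into the pairs of equations (\ref{eq:eqodd2-1})--(\ref{eq:eqodd2-2}) and (\ref{eq:eqodd3-1})--(\ref{eq:eqodd3-2}) that drive Lemmas \ref{lem:eqodd2} and \ref{lem:eqodd3}; both identities check out against (\ref{eq:recC}). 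The only organizational difference is that you induct on $n$ where the paper inducts on the difference $s=n_1-n_2$ and separates the two parities of $n_2$ into two lemmas.

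The gap is that you stop exactly where the real work begins. The claim that ``each listed form splits, under the decomposition, into two listed forms of smaller total, and conversely'' is not bookkeeping that can be deferred: it is the entire content of Lemmas \ref{lem:eqodd2} and \ref{lem:eqodd3}, each of which requires a case split on the parity of $t$, and within each case a further decomposition $t=2^j(2x_0+1)$ (or $x=2^j(2t_0+1)-1$) together with a chain of applications of Lemma \ref{lem:eqeven} and the induction hypothesis just to verify that the half-defect not controlled by the recursive characterization actually vanishes when $m$ has the claimed form. In your descent the two half-pairs are in general of different types (one collapses to the difference-one family, the other carries a form-(2) or form-(3) pair with $l$ decreased by one), and nothing in the write-up rules out a pair outside the three families having both half-defects zero, which is the ``only if'' direction you must also prove. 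There is additionally an unaddressed boundary case: in $\delta(2\alpha,2\beta+1)=\delta(\alpha,\beta+1)+\delta(\alpha,\beta)$ the term $\delta(\alpha,\beta)$ is undefined when $\beta=0$, i.e.\ $n_2=1$, which the paper handles separately via Remark \ref{rem:postkey}(a). Finally, the suggested fallback of substituting the three families into Proposition \ref{prop:valueC} is not ``a finite computation'': the binary expansions of $n_1$, $n_2$ and $n$ depend on those of $p$ and $t$ and interact through carries, so that route is no easier than the descent and would in any case only yield sufficiency.
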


We split the proof of this result into several auxiliary lemmas.

\begin{lemma}\label{lem:eqeven}
Let $s=2^kt$ with $k\geq 1$ and $t\geq 1$. Then, for every $n\geq 1$, $(n+s,n)\in QB(2n+s)$  if, and only if,  $n=2^km$, for some $m\geq 1$ such that $(m+t,m)\in QB(2m+t)$.
\end{lemma}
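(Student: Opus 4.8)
The plan is to reformulate both $QB$-membership conditions as equalities in the sequence $C$ and then establish the equivalence by induction on $k$. Recall that Lemma~\ref{lem:key} gives $C(n+s)+C(n)+s\geq C(2n+s)$ for all $n\geq 1$ and $s\geq 0$, so that by the definition of $QB$ the membership $(n+s,n)\in QB(2n+s)$ holds precisely when this inequality is an equality, i.e. when $C(n+s)+C(n)+s=C(2n+s)$; likewise $(m+t,m)\in QB(2m+t)$ is equivalent to $C(m+t)+C(m)+t=C(2m+t)$. Thus the lemma reduces to showing that, for $s=2^kt$ with $k\geq 1$ and $t\geq 1$, the equality $C(n+s)+C(n)+s=C(2n+s)$ holds if and only if $n=2^km$ for some $m\geq 1$ with $C(m+t)+C(m)+t=C(2m+t)$.

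First I would isolate a reduction step, which is exactly the case $k=1$: for any even shift $s=2u$ with $u\geq 1$, the equality $C(n+2u)+C(n)+2u=C(2n+2u)$ holds if and only if $n$ is even, say $n=2m$, and $C(m+u)+C(m)+u=C(2m+u)$. This is where all the real content lies, and it splits on the parity of $n$. If $n=2m$ is even, recurrence (\ref{eq:recC}) gives $C(2a)=2C(a)$ for every $a$, whence $C(n+2u)=2C(m+u)$, $C(n)=2C(m)$ and $C(2n+2u)=2C(2m+u)$; substituting, the quantity $C(n+2u)+C(n)+2u-C(2n+2u)$ equals exactly $2\bigl(C(m+u)+C(m)+u-C(2m+u)\bigr)$, so it vanishes if and only if the corresponding expression for $(m,u)$ vanishes. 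If instead $n$ is odd, then the shift $s=2u$ is even and $\geq 2$, so Remark~\ref{rem:postkey}(b) applies and yields the strict inequality $C(n+2u)+C(n)+2u>C(2n+2u)$; hence the equality can never hold for odd $n$. This establishes the reduction step in the precise biconditional form stated above.

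Finally I would run an induction on $k$, the base case $k=1$ being the reduction step itself. For the inductive step with $k\geq 2$, I write $s=2^kt=2\cdot(2^{k-1}t)$ and apply the reduction step with $u=2^{k-1}t$ (note $u\geq 1$), obtaining that $C(n+s)+C(n)+s=C(2n+s)$ holds if and only if $n=2m$ with $C(m+2^{k-1}t)+C(m)+2^{k-1}t=C(2m+2^{k-1}t)$. Applying the induction hypothesis to this last equality, which is the statement for exponent $k-1$ and the same $t$, shows it holds if and only if $m=2^{k-1}m'$ with $C(m'+t)+C(m')+t=C(2m'+t)$. Combining the two reductions gives $n=2\cdot 2^{k-1}m'=2^km'$ together with $C(m'+t)+C(m')+t=C(2m'+t)$, which is exactly the desired condition; since $n\geq 1$ forces $m'\geq 1$, this completes the proof.

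The main obstacle I anticipate is the odd-$n$ case of the reduction step: one must know that the inequality of Lemma~\ref{lem:key} is \emph{strict} when the shift is a positive even number and $n$ is odd. This is precisely what is recorded in Remark~\ref{rem:postkey}(b), extracted from the proof of Lemma~\ref{lem:key}, so the obstacle is already resolved; the even-$n$ computation and the induction on $k$ are then purely mechanical applications of the identity $C(2a)=2C(a)$.
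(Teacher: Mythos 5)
Your proof is correct and follows essentially the same route as the paper's: both reduce $QB$-membership to the equality case of Lemma \ref{lem:key}, handle the single halving step via Remark \ref{rem:postkey}(b) (to force $n$ even) together with $C(2a)=2C(a)$, and then induct on the exponent $k$. The only cosmetic difference is that you state the parity obstruction as a strict inequality for odd $n$ while the paper invokes its contrapositive.
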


\begin{proof}
We prove the equivalence in the statement by induction on the exponent $k\geq 1$.
Recall that, by Remark \ref{rem:postkey}.(b), if $s\geq 1$ is even and $C(n+s)+C(n)+s=C(2n+s)$, then $n$ must be even, too.  Therefore, if $s=2t_0$, then $n=2m_0$ for some $m_0\geq 1$, and then, since
$C(2m_0+2t_0)+C(2m_0)+2t_0=2(C(m_0+t_0)+C(m_0)+t_0)$ and 
$C(4m_0+2t_0)=2C(2m_0+t_0)$,
the equality $C(n+s)+C(n)+s=C(2n+s)$ is equivalent to the equality $C(m_0+t_0)+C(m_0)+t_0=C(2m_0+t_0)$. 
This proves the equivalence in the statement when $k=1$. 

Now, assume that this equivalence is true  for the exponent $k-1$, and let $s=2^kt$. Then, by the case $k=1$, $C(n+s)+C(n)+s=C(2n+s)$ if, and only if, 
$n=2m_0$ for some $m_0\geq 1$ such that $C(m_0+2^{k-1}t)+C(m_0)+2^{k-1}t=C(2m_0+2^{k-1}t)$, and, by the induction hypothesis, this last equality holds if, and only if, 
$m_0=2^{k-1}m$ for some $m\geq 1$ such that $C(m+t)+C(m)+t=C(2m+t)$. Combining both equivalences we obtain the equivalence in the statement, thus proving the inductive step.
\end{proof}

\begin{lemma}\label{lem:eqodd2}
Let  $s=2^{k+1}-(2t+1)$, with $k=\floor*{\log_2(s)}$ and  $0\leq t<2^{k-1}$, be an odd integer. Then, for every $m\geq 1$, 
$(2m+s,2m)\in QB(4m+s)$ if, and only if, $m=2^{k}p$ for some $p\geq 1$.
\end{lemma}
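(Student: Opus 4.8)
The plan is to recast the whole question in terms of the \emph{defect} $\Delta(n,s):=C(n+s)+C(n)+s-C(2n+s)$, which by Lemma \ref{lem:key} is always $\geq 0$. Since $s$ here is a fixed odd integer $\geq 3$ and $2m+s\geq 2m\geq 1$, the definition of $QB$ gives $(2m+s,2m)\in QB(4m+s)$ if and only if $C(2m+s)+C(2m)+s=C(4m+s)$, i.e.\ if and only if $\Delta(2m,s)=0$. So the statement is exactly a characterization of the zero set of $m\mapsto\Delta(2m,s)$, and the task is to show it is $\{2^k p : p\geq 1\}$.

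The computational engine, established once from the recurrence (\ref{eq:recC}) together with $C(2j)=2C(j)$ and $C(2j+1)=C(j+1)+C(j)+1$, consists of two recurrences for $\Delta$ on an even first argument: $\Delta(2m,2u)=2\Delta(m,u)$ for an even shift, and $\Delta(2m,2u+1)=\Delta(m,u)+\Delta(m,u+1)$ for an odd shift. Both are short expansions in which all the bulk terms cancel. Writing $s=2u+1$ with $u=2^k-t-1$ (so that $t$ ranging over $[0,2^{k-1})$ makes $u$ range over the full dyadic block $[2^{k-1},2^k-1]$), the odd recurrence yields $\Delta(2m,s)=\Delta(m,u)+\Delta(m,u+1)$; since both summands are nonnegative, $\Delta(2m,s)=0$ if and only if $\Delta(m,u)=0$ \emph{and} $\Delta(m,u+1)=0$ at once. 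The whole lemma is thereby reduced to understanding when two \emph{consecutive} shifts $u,u+1$, both with leading bit at position $k-1$, give zero defect for the same $m$.

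For the ``if'' direction I would prove the auxiliary claim that $\Delta(2^{k}p,w)=0$ for every $p\geq 1$ and every $0\le w\le 2^{k}$, by induction on $k$ using the two $\Delta$-recurrences: an even $w=2w'$ reduces to $2\Delta(2^{k-1}p,w')$, an odd $w=2w'+1$ reduces to $\Delta(2^{k-1}p,w')+\Delta(2^{k-1}p,w'+1)$, and in both cases the new shifts stay $\le 2^{k-1}$, so the hypothesis applies. Because $s\le 2^{k+1}-1<2^{k+1}$, applying this claim with exponent $k+1$ to $2m=2^{k+1}p$ gives $\Delta(2m,s)=0$ immediately when $m=2^k p$. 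The ``only if'' direction I would prove by induction on $k$ in the generalized form: if $2^{k-1}\le u\le 2^k-1$ and $\Delta(m,u)=\Delta(m,u+1)=0$, then $2^k\mid m$. The even member of the pair $\{u,u+1\}$ is $\geq 2$, so by Remark \ref{rem:postkey}(b) an odd $m$ would force its defect to be strictly positive; hence $m$ is even, which already settles the base case $k=1$. For $k\geq 2$ write $m=2m'$ and push both equalities through the recurrences: the even member (via $\Delta(2m',2c)=2\Delta(m',c)$) forces one vanishing defect and the odd member (via $\Delta(2m',2c+1)=\Delta(m',c)+\Delta(m',c+1)$) forces two, and in either parity case these collapse to $\Delta(m',c)=\Delta(m',c+1)=0$ for a shift $c\in[2^{k-2},2^{k-1}-1]$, i.e.\ with leading bit dropped by exactly one. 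The induction hypothesis then yields $2^{k-1}\mid m'$, hence $2^k\mid m$.

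The main obstacle I anticipate is not any single calculation but choosing the right object to induct on: the lemma speaks about $\Delta(2m,s)$, whereas the clean induction is about a \emph{pair} of consecutive shifts $(u,u+1)$ and an arbitrary $m$, so the statement must be generalized before the induction closes. The delicate bookkeeping is then to verify, in each of the two parity cases for $u$, that halving a shift in $[2^{k-1},2^k-1]$ lands exactly in $[2^{k-2},2^{k-1}-1]$ so that the leading bit decreases by precisely one per step, keeping the induction hypothesis applicable. Once the two $\Delta$-recurrences and the dyadic claim are in place, everything else is routine; the reduction of the consecutive pair is where the care is concentrated.
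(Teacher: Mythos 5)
Your proposal is correct, and its skeleton coincides with the paper's: writing $s=2u+1$ with $u=2^k-t-1$ and using Lemma \ref{lem:key} to split the condition $C(2m+s)+C(2m)+s=C(4m+s)$ into the simultaneous vanishing of the two nonnegative defects $\Delta(m,u)$ and $\Delta(m,u+1)$ is exactly the paper's reduction to identities (\ref{eq:eqodd2-1}) and (\ref{eq:eqodd2-2}). Where you genuinely diverge is in how the two directions are closed. The paper inducts on $s$ and, for sufficiency, verifies the remaining identity at $m=2^kp$ by an explicit computation that unwinds the $2$-adic valuation of $t$ (the subcases $t=2^j(2x_0+1)$, $x=2^j(2t_0+1)-1$, and so on). You instead isolate one uniform claim, $\Delta(2^kp,w)=0$ for every $p\geq 1$ and every $0\leq w\leq 2^k$, which follows by a short induction from the recurrences $\Delta(2m,2u)=2\Delta(m,u)$ and $\Delta(2m,2u+1)=\Delta(m,u)+\Delta(m,u+1)$ (both of which I have checked against (\ref{eq:recC})), and which gives sufficiency at once since $s<2^{k+1}$; this eliminates all of the paper's valuation bookkeeping. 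Your necessity direction, an induction on $k$ over pairs of consecutive shifts whose halves land exactly in the next dyadic block $[2^{k-2},2^{k-1}-1]$, is a cleaner repackaging of the paper's argument, replacing the appeal to Lemma \ref{lem:eqeven} by a direct use of Remark \ref{rem:postkey}(b) to force $m$ even. The only loose end is the case $s=1$ (i.e.\ $k=t=0$), which your opening sentence excludes by assuming $s\geq 3$; it is covered trivially by $\Delta(2m,1)=\Delta(m,0)+\Delta(m,1)=0$, matching the paper's base case, so nothing essential is missing.
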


\begin{proof}
We prove the equivalence in the statement by induction on $s$. When $s=1=2^{1}-1$, the equivalence says that
$C(2m+1)+C(2m)+1=C(4m+1)$  for every $m\geq 1$, which is true by (\ref{eq:recC}).

Assume now that the equivalence is true for every odd natural number $s'<s$ and for every $m$, and let us prove it for $s=2^{k+1}-(2t+1)$ with $0\leq t<2^{k-1}$. We have that
$$
\begin{array}{l}
C(2m+2^{k+1}-2t-1)+C(2m)+2^{k+1}-2t-1\\
\qquad =\big(C(m+2^k-t)+C(m)+2^k-t\big)+\big(C(m+2^k-t-1)+C(m)+2^k-t-1\big)+1\\
C(4m+2^{k+1}-2t-1)=C(2m+2^k-t)+C(2m+2^k-t-1)+1
 \end{array}
 $$
 and since, by Lemma \ref{lem:key},
 $C(m+2^k-t)+C(m)+2^k-t\geq C(2m+2^k-t)$ and $C(m+2^k-t-1)+C(m)+2^k-t-1\geq C(2m+2^k-t-1)$,
 we have that $C(2m+s)+C(2m)+s=C(4m+s)$ if, and only if, the following two identities are satisfied:
\begin{align}
& C(m+2^k-t)+C(m)+2^k-t= C(2m+2^k-t) \label{eq:eqodd2-1}\\
& C(m+2^k-t-1)+C(m)+2^k-t-1= C(2m+2^k-t-1)  \label{eq:eqodd2-2}
\end{align}
So, we must prove that (\ref{eq:eqodd2-1}) and (\ref{eq:eqodd2-2}) hold if, and only if, $m=2^{k}p$ for some $p\geq 1$.
We  distinguish two subcases, depending on the parity of $t$:
\begin{itemize}

\item If $t=2x$ for some $0\leq x<2^{k-2}$, then (\ref{eq:eqodd2-1}) and Lemma  \ref{lem:eqeven} imply that $m$ is even, say $m=2m_0$, and then  (\ref{eq:eqodd2-2}) says
\begin{equation}
C(2m_0+2^k-2x-1)+C(2m_0)+2^k-2x-1= C(4m_0+2^k-2x-1),
\label{eq:eqodd2-3}
\end{equation}
which, by induction,  is equivalent to $m_0=2^{k-1}p$ for some $p\geq 1$, i.~e., to $m=2^{k}p$ for some $p\geq 1$.
So, to complete the proof of the desired equivalence, it remains to prove that if $m=2^{k}p$, then
(\ref{eq:eqodd2-1}) holds. If $t=0$, this equality is a direct consequence  of Lemma   \ref{lem:eqeven} and   (\ref{eq:recC}),  so assume that $t>0$ and write it as $t=2^j(2x_0+1)$ with $1\leq j<k-1$ and $x_0<2^{k-j-2}$. Then
$$
\begin{array}{l}
C(m+2^k-t)+C(m)+2^k-t  = C(2^{k}p+2^k-2^j(2x_0+1))+C(2^{k}p)+2^k-2^j(2x_0+1)\\
\qquad = 2^j\big(C(2^{k-j}p+2^{k-j}-2x_0-1)+C(2^{k-j}p)+2^{k-j}-2x_0-1\big)\\
\qquad = 2^jC(2^{k-j+1}p+2^{k-j}-2x_0-1) \mbox{ (by the induction hypothesis)}\\
\qquad = C(2^{k+1}p+2^{k}-2^j(2x_0+1))=C(2m+2^k-t)
\end{array}
$$

\item If $t=2x+1$ for some $0\leq x< 2^{k-2}$, then (\ref{eq:eqodd2-2}) and Lemma  \ref{lem:eqeven} imply that $m$ is even, say $m=2m_0$, and then it is (\ref{eq:eqodd2-1}) which  is equivalent to equation (\ref{eq:eqodd2-3}) above, which, on its turn, by induction
is equivalent to $m_0=2^{k-1}p$ for some $p\geq 1$, that is, to $m=2^{k}p$ for some $p\geq 1$
Thus, to complete the proof of the desired equivalence, it remains to prove that if $m=2^{k}p$, then
(\ref{eq:eqodd2-2}) holds. Now:
$$
\begin{array}{l}
C(m+2^k-t-1)+C(m)+2^k-t-1  = C(2^{k}p+2^k-2x-2)+C(2^{k}p)+2^k-2x-2\\
\qquad =2\big(C(2^{k-1}p+2^{k-1}-x-1)+C(2^{k-1}p)+2^{k-1}-x-1\big)
\end{array}
$$
Now, if $x$ is even, say $x=2x_0$, then, since $x_0<2^{k-3}$, the induction hypothesis implies that
$$
\begin{array}{l}
2\big(C(2^{k-1}p+2^{k-1}-x-1)+C(2^{k-1}p)+2^{k-1}-x-1\big) =2 C(2^{k}p+2^{k-1}-x-1) \\
\qquad = C(2^{k+1}p+2^{k}-2x-2)= C(2m+2^k-t-1)
\end{array}
$$
And if $x$ is odd, write it as $x=2^j(2t_0+1)-1$ for some $1\leq j<k-1$ (and notice that $x<2^{k-2}$ implies $t_0<2^{k-j-3}$) and then
$$
\begin{array}{l}
2\big(C(2^{k-1}p+2^{k-1}-x-1)+C(2^{k-1}p)+2^{k-1}-x-1\big)\\
\qquad=
2\big(C(2^{k-1}p+2^{k-1}-2^j(2t_0+1))+C(2^{k-1}p)+2^{k-1}-2^j(2t_0+1)\big)\\
\qquad =2\cdot 2^j\big(C(2^{k-j-1}p+2^{k-j-1}-(2t_0+1))+C(2^{k-j-1}p)+2^{k-j-1}-(2t_0+1)\big)\\
\qquad = 2^{j+1} C(2^{k-j}p+2^{k-j-1}-(2t_0+1)) \mbox{ (by the induction hypothesis)}\\
\qquad = C(2^{k+1}p+2^{k}-2^{j+1}(2t_0+1))=C(2^{k+1}p+2^{k}-2x-2)=C(2m+2^k-t-1)
\end{array}
$$
This completes the proof of the desired equivalence when $t$ is odd.
\end{itemize}
So, the inductive step is true in all cases.
\end{proof}

\begin{lemma}\label{lem:eqodd3}
Let  $s=2^{k+1}-(2t+1)$, with $k=\floor*{\log_2(s)}$ and  $0\leq t<2^{k-1}$, be an odd integer.  Then, for every $m\geq 0$, $(2m+1+s,2m+1)\in QB(4m+2+s)$ if, and only if, either $m=2^{k}p+t$ for some $p\geq 1$ or $s=1$ (i.e., $k=t=0$) and $m=0$.
\end{lemma}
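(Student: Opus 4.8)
The plan is to follow the architecture of the proof of Lemma~\ref{lem:eqodd2}, now with an \emph{odd} smaller part $2m+1$, arguing by induction on the odd offset $s$. First I would dispose of the boundary value $m=0$: there the asserted membership is $(1+s,1)\in QB(2+s)$, i.e. $C(1+s)+C(1)+s=C(2+s)$, which by Remark~\ref{rem:postkey}(a) holds exactly when $s\leq 1$, that is (as $s$ is odd) when $s=1$. This is precisely the degenerate alternative, and it also settles the base case $s=1$ ($k=t=0$): here $4m+3$ is odd, so (\ref{eq:recC}) gives $C(2m+2)+C(2m+1)+1=C(4m+3)$ for \emph{all} $m\geq 0$, matching ``$m=2^{0}p$ with $p\geq 1$, or $m=0$''. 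From now on I assume $m\geq 1$, and the inductive offset is odd with $s\geq 3$.

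The core is a reduction of the defining equality of $QB$. Write $a=2^{k}-t$, so that $a\geq 1$ and $s=2^{k+1}-(2t+1)=2a-1$, hence $a-1,a<s$. Since $2m+1+s=2(m+a)$ is even while $2m+1$ and $4m+2+s$ are odd, recurrence (\ref{eq:recC}) yields
\begin{align*}
C(2m+1+s)+C(2m+1)+s&=\bigl[C(m+a)+C(m+1)+(a-1)\bigr]+\bigl[C(m+a)+C(m)+a\bigr]+1,\\
C(4m+2+s)&=C(2m+a+1)+C(2m+a)+1.
\end{align*}
Applying Lemma~\ref{lem:key} to the two bracketed sums (with $n=m+1$, offset $a-1$, and with $n=m$, offset $a$) bounds each bracket below by the matching summand of $C(4m+2+s)$; therefore $(2m+1+s,2m+1)\in QB(4m+2+s)$ if and only if both the identity $C(m+a)+C(m+1)+(a-1)=C(2m+a+1)$, i.e. $(m+a,m+1)\in QB(2m+a+1)$ (call it (I)), and the identity $C(m+a)+C(m)+a=C(2m+a)$, i.e. $(m+a,m)\in QB(2m+a)$ (call it (II)), hold. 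Note that (I) is an instance of the present lemma with smaller part $m+1$ and offset $a-1$, and (II) one with smaller part $m$ and offset $a$, both offsets being $<s$.

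Next I would split on the parity of $t$, exactly as in Lemma~\ref{lem:eqodd2}. If $t$ is even then $a$ is even and $a-1$ odd: the even offset $a\geq 2$ in (II) forces, by Remark~\ref{rem:postkey}(b), $m$ to be even, $m=2m_{0}$; then (I) reads $(2m_{0}+1+(a-1),2m_{0}+1)\in QB(4m_{0}+2+(a-1))$, an instance of the present lemma with odd offset $a-1<s$, to which the induction hypothesis applies. From $0\leq t<2^{k-1}$ one gets $2^{k-1}\leq a-1<2^{k}$, so $\floor*{\log_{2}(a-1)}=k-1$ and $a-1=2^{k}-(2(t/2)+1)$; the inductive characterization then yields $m_{0}=2^{k-1}p+t/2$ ($p\geq 1$), i.e. $m=2^{k}p+t$. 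If instead $t$ is odd (forcing $k\geq 2$), then $a$ is odd and $a-1$ even: now the even offset $a-1\geq 2$ in (I) forces $m+1$ even, $m=2m_{0}+1$, and (II) becomes an instance of the present lemma with odd offset $a<s$ and $\floor*{\log_{2}a}=k-1$, $a=2^{k}-(2\,\tfrac{t-1}{2}+1)$, giving $m_{0}=2^{k-1}p+\tfrac{t-1}{2}$ and $m=2^{k}p+t$. In both cases the degenerate $m_{0}=0$ alternative of the induction hypothesis is excluded under $m\geq 1$ (either because $m$ is forced even and positive, so $m_{0}\geq 1$, or because the reduced odd offset is $\geq 3$).

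For the converse I would check that $m=2^{k}p+t$ makes both (I) and (II) hold: whichever of the two has odd offset is exactly the membership delivered (as an ``if and only if'') by the induction hypothesis, while the one with even offset is verified directly, factoring out the relevant power of $2$ by Lemma~\ref{lem:eqeven} and collapsing with (\ref{eq:recC}) --- here $m+a=2^{k}(p+1)$ makes that identity transparent, as in the closing computations of each bullet of the proof of Lemma~\ref{lem:eqodd2}. I expect this final verification, together with the $2$-adic bookkeeping (pinning down $\floor*{\log_{2}(\cdot)}$ and the residue for the reduced offset while respecting the range $0\leq t<2^{k-1}$), to be the main obstacle: the logical skeleton is short, but these index computations are lengthy and are where sign and off-by-one errors are easiest to commit.
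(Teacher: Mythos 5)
Your proposal follows essentially the same route as the paper's proof: the same decomposition of the $QB$-membership into the two sub-identities (your (I) and (II) are exactly the paper's equations (\ref{eq:eqodd3-2}) and (\ref{eq:eqodd3-1}), split off via Lemma \ref{lem:key}), the same handling of $m=0$ through Remark \ref{rem:postkey}, the same parity-of-$t$ case analysis forcing the parity of $m$, and the same scheme of reducing the odd-offset identity by induction while verifying the even-offset one directly for $m=2^kp+t$. The substitution $a=2^k-t$ is purely cosmetic, and the $2$-adic computations you defer at the end are precisely the ones the paper carries out explicitly (and, as you anticipate, they also require invoking the inductive hypothesis and Lemma \ref{lem:eqodd2} after factoring out the $2$-part, not just Lemma \ref{lem:eqeven} and (\ref{eq:recC})).
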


\begin{proof}
We prove the equivalence in the statement by induction on $s$. When $s=1=2^{1}-1$, the equivalence says that
$C(2m+2)+C(2m+1)+1=C(4m+3)$  for every $m\geq 0$, which is true by (\ref{eq:recC}).

Assume now that the equivalence is true for every odd natural number $s'<s$ and for every $m\geq 0$, and let us prove it for $s=2^{k+1}-(2t+1)$ with $0\leq t<2^{k-1}$. In this case, if $m=0$, then by Remark \ref{rem:postkey}.(a) we know that $(s+1,s)\in QB(s+2)$ if, and only if, $s=1$. So, assume that $m\geq 1$. Then, we have that
$$
\begin{array}{l}
C(2m+1+2^{k+1}-2t-1)+C(2m+1)+2^{k+1}-2t-1\\
\qquad =\big(C(m+2^k-t)+C(m)+2^k-t\big)+\big(C(m+2^k-t)+C(m+1)+2^k-t-1\big)+1\\
C(4m+2+2^{k+1}-2t-1)=C(2m+2^k-t)+C(2m+2^k-t+1)+1
 \end{array}
 $$
 and since, by Lemma \ref{lem:key},
 $C(m+2^k-t)+C(m)+2^k-t\geq C(2m+2^k-t)$ and 
$C(m+2^k-t)+C(m+1)+2^k-t-1\geq C(2m+2^k-t+1)$,
 we have that $C(2m+1+s)+C(2m+1)+s=C(4m+2+s)$ if, and only if,
\begin{align}
& C(m+2^k-t)+C(m)+2^k-t= C(2m+2^k-t) \label{eq:eqodd3-1}\\
& C(m+2^k-t)+C(m+1)+2^k-t-1= C(2m+2^k-t+1) \label{eq:eqodd3-2}
\end{align}
So, we must prove that (\ref{eq:eqodd3-1}) and (\ref{eq:eqodd3-2}) hold for $m\geq 1$ if, and only if, $m=2^{k}p+t$ for some $p\geq 1$. We distinguish again two subcases, depending on the parity of $t$:

\begin{itemize}

\item If $t=2x$ for some $0\leq x<2^{k-2}$, then (\ref{eq:eqodd3-1}) and Lemma  \ref{lem:eqeven} imply that $m$ is even, say $m=2m_0$ with $m_0\geq 1$, and then (\ref{eq:eqodd3-2}) can be written 
$$
C(2m_0+1+2^k-2x-1)+C(2m_0+1)+2^k-2x-1= C(4m_0+2+2^k-2x-1)
$$ 
which, by induction,  is equivalent to $m_0=2^{k-1}p+x$ for some $p\geq 1$, that is, to $m=2^{k}p+t$ for some $p\geq 1$. So, to complete the proof of the desired equivalence, it remains to check that if $m=2^{k}p+t$, then (\ref{eq:eqodd3-1}) holds. Now, if $x=0$, so that $m=2^{k}p$,  (\ref{eq:recC}) and Lemma   \ref{lem:eqeven} clearly imply (\ref{eq:eqodd3-1}). So, assume that $x>0$ and write it as $x=2^j(2y_0+1)$ with $0\leq j< k-2$ and $y_0<2^{k-j-3}$. Then
$$
\begin{array}{l}
C(m+2^k-t)+C(m)+2^k-t=C(2^{k}p+2x+2^k-2x)+C(2^{k}p+2x)+2^k-2x\\
\quad= C(2^{k}p+2^{j+1}(2y_0+1)+2^{k}-2^{j+1}(2y_0+1))+C(2^{k}p+2^{j+1}(2y_0+1))+2^{k}-2^{j+1}(2y_0+1)\\
\quad =2^{j+1}\big(C(2^{k-j-1}p+2y_0+1+2^{k-j-1}-(2y_0+1))+C(2^{k-j-1}p+2y_0+1)+2^{k-j-1}-(2y_0+1)\big)\\
\quad =2^{j+1} C(2^{k-j}p+4y_0+2+2^{k-j-1}-(2y_0+1))\mbox{ (by the induction hypothesis)}\\
\quad =C(2^{k+1}p+2^{k}+2^{j+1}(2y_0+1))=C(2^{k+1}p+2^{k}+2x)=C(2m+2^k-t)
\end{array}
$$
as we wanted to prove.

\item If $t=2x+1$ for some $0\leq x<2^{k-2}$,  (\ref{eq:eqodd3-2}) and Lemma  \ref{lem:eqeven} imply that $m+1$ is even, and then $m$ is odd, say $m=2m_0+1$ for some $m_0\geq 0$, and  (\ref{eq:eqodd3-1}) can be written 
\begin{equation}
C((2m_0+1)+2^k-2x-1)+C(2m_0+1)+2^k-2x-1= C(4m_0+2+2^k-2x-1).
\label{eq:encaraunaltra}
\end{equation}
Now, if $m_0=0$, Remark \ref{rem:postkey}.(a) implies that this equality holds if, and only if, $2^k-2x-1=1$ which, under the condition $0\leq x<2^{k-2}$, only happens when $k=1$ and $x=0$, but then $t=1=2^{k-1}$ against the assumption that  $t<2^{k-1}$. Therefore $m_0$ must be at least 1.

Then, by induction,  identity (\ref{eq:encaraunaltra})
 is equivalent to $m_0=2^{k-1}p+x$ for some $p\geq 1$, that is, to $m=2^{k}p+t$ for some $p\geq 1$.  So, to complete the proof of the desired equivalence, it remains to check that if $m=2^{k}p+t$, then (\ref{eq:eqodd3-2}) holds.  Now, in the current situation:
$$
\begin{array}{l}
C(m+2^k-t)+C(m+1)+2^k-t-1\\
\qquad =C(2^{k}p+2x+1+2^k-2x-1)+C(2^{k}p+2x+2)+2^k-2x-2\\
\qquad = 2\big(C((2^{k-1}p+x+1)+(2^{k-1}-x-1))+C(2^{k-1}p+x+1)+2^{k-1}-x-1\big)=(**)
\end{array}
$$

If $x$ is even, say $x=2x_0$ with $0\leq x_0<2^{k-3}$, then 
\begin{align*}
(**) & =2\big(C((2^{k-1}p+2x_0+1)+(2^{k-1}-2x_0-1))+C(2^{k-1}p+2x_0+1)+2^{k-1}-2x_0-1\big)\\ 
& = 2C(2^{k}p+2(2x_0+1)+2^{k-1}-(2x_0+1))\mbox{ (by induction)}\\
&=C(2^{k+1}p+2^{k}+4x_0+2) =C(2m+2^k-t+1)
\end{align*}

And if $x$ is odd, write it as $x=2^j(2t_0+1)-1$ with $1\leq j<k-1$ and $t_0<2^{k-j-3}$, and then
\begin{align*}
(**) & =2\big(C(2^{k-1}p+2^j(2t_0+1)+2^{k-1}-2^j(2t_0+1))+C(2^{k-1}p+2^j(2t_0+1))+2^{k-1}-2^j(2t_0+1)\big)\\
 & = 2^{j+1}\big(C(2^{k-j-1}p+2t_0+1+2^{k-j-1}-(2t_0+1))+C(2^{k-j-1}p+2t_0+1)+2^{k-j-1}-(2t_0+1)\big)\\
& = 2^{j+1}C(2^{k-j}p+4t_0+2+2^{k-j-1}-(2t_0+1)) \mbox{ (by the induction hypothesis)}\\
& = C(2^{k+1}p+2^{j+1}(2t_0+1)+2^{k})=C(2^{k+1}p+2x+2+2^{k})=C(2m+2^k-t+1)
\end{align*}
This completes the proof of the desired equivalence when $t$ is odd. 
\end{itemize}
\end{proof}

We can return now to Proposition \ref{prop:eqC}.\medskip

\noindent\textit{Proof of Proposition \ref{prop:eqC}.} Assertion (1) is a direct consequence of identity (\ref{eq:recC}). So, assume $n_1>n_2$ and set $s=n_1-n_2$, so that $n_1=n_2+s$. Then:
\begin{enumerate}[(a)]
\item If $s=1$, then, by Lemma \ref{lem:eqodd3}, $C(n_1)+C(n_2)+n_1-n_2=C(n_1+n_2)$ for every $n_2\geq 1$.

\item If $s>1$ is odd, write it as $s=2^{j+1}-(2t+1)$, with $j=\floor*{\log_2(s)}\geq 1$ and  $0\leq t<2^{j-1}$. Then, by Lemmas \ref{lem:eqodd2} and \ref{lem:eqodd3}, $C(n_1)+C(n_2)+n_1-n_2=C(n_1+n_2)$ if, and only if, either
$n_2=2^{j+1}p$ or $n_2=2^{j+1}p+2t+1$, for some $p\geq 1$.

\item If $s\geq 2$ is even, write it as $s=2^ks_0$, with $k\geq 1$ the largest exponent of a power of 2 that divides $s$ and $s_0$  an odd integer, which we write $s_0=2^{j+1}-(2t+1)$ with $j=\floor*{\log_2(s_0)}$ and  $0\leq t<2^{j-1}$. Then, by Lemma \ref{lem:eqeven}, $C(n_1)+C(n_2)+n_1-n_2=C(n_1+n_2)$ if, and only if,
$n_2=2^km$, for some $m\geq 1$ such that $C(m+s_0)+C(m)+s_0=C(2m+s_0)$, and then:
\begin{itemize}
\item If $s_0=1$, $C(m+s_0)+C(m)+s_0=C(2m+s_0)$ for every $m\geq 1$ and therefore, in this case, $C(n_1)+C(n_2)+n_1-n_2=C(n_1+n_2)$ for every $n_2=2^km$ with $m\geq 1$.
\item If $s_0>1$, Lemmas \ref{lem:eqodd2} and \ref{lem:eqodd3} imply that $C(m+s_0)+C(m)+s_0=C(2m+s_0)$ if, and only if, $m=2^{j+1}p$ or $m=2^{j+1}p+2t+1$, for some $p\geq 1$. Therefore, in this case, $C(n_1)+C(n_2)+n_1-n_2=C(n_1+n_2)$ if, and only if, 
$n_2=2^{k+j+1}p$ or $n_2=2^k(2^{j+1}p+2t+1)$, for some $p\geq 1$.
\end{itemize}
\end{enumerate}

Combining the three cases, and taking $k=0$ in the odd $s$ case, we conclude that $C(n_1)+C(n_2)+n_1-n_2=C(n_1+n_2)$ if, and only if, $n_1-n_2=2^k(2^{j+1}-(2t+1))$ (for some $k\geq 0$, $j\geq 0$, and $0\leq t<2^{j-1}$) and
\begin{itemize} 
\item If $j=t=0$, then $n_2=2^kp$ for some $p\geq 1$, in which case $n_1=2^kp+1$ and $n=2^{k+1}p+1$

\item If $j>0$ or $t>0$, then one of the following two conditions holds for some $p\geq 1$:
\begin{itemize}
\item $n_2=2^{k+j+1}p$,  in which case $n_1=2^k(2^{j+1}(p+1)-(2t+1))$ and $n=2^k(2^{j+1}(2p+1)-(2t+1))$; or 

\item  $n_2=2^k(2^{j+1}p+2t+1)$, $n_1=2^{k+j+1}(p+1)$ and $n=2^k(2^{j+1}(2p+1)+ 2t+1)$
\end{itemize}
\end{itemize}
This is equivalent to the expressions for $n_1$ and $n_2$ in option (2) in the statement (replacing $j+1$ by $l\geq 1$). \qed

\begin{prop}\label{cor:QB}
For every $n\geq 2$, let $k\geq 0$ be the exponent of the largest power of 2 that divides $n$, let $n_0=n/2^k$, and let $n_0=\sum_{i=1}^\ell 2^{m_i}$, with $m_1>\cdots>m_{\ell-1}>m_\ell=0$, be the binary expansion of $n_0$. Then
\begin{enumerate}[(a)]
\item If $\ell=1$, i.e., if $n=2^k$, then $QB(n)=\{(n/2,n/2)\}$.

\item If $\ell>1$:
\begin{itemize}
\item[(b.1)] $QB(n)$ always contains the pair
$$
\Big(2^k\Big(\sum_{i=1}^{\ell-1} 2^{m_i-1}+1\Big), 2^k\sum_{i=1}^{\ell-1} 2^{m_i-1}\Big).
$$

\item[(b.2)] For every $j=2,\ldots, \ell-1$ such that $m_j>m_{j+1}+1$, $QB(n)$ contains the pair 
$$
\Big(2^{k}\Big(\sum_{i=1}^{j-1} 2^{m_i-1}+2^{m_j}\Big),n-2^{k}\Big(\sum_{i=1}^{j-1} 2^{m_i-1}+2^{m_j}\Big)\Big).
$$

\item[(b.3)] For every $j=2,\ldots,\ell-1$ such that $m_j<m_{j-1}-1$, 
$QB(n)$ contains the pair 
$$
\Big(n-2^{k}\sum_{i=1}^{j-1} 2^{m_i-1},2^{k}\sum_{i=1}^{j-1} 2^{m_i-1}\Big).
$$

\item[(b.4)] If $k\geq 1$, then $QB(n)$ contains the pair $(n/2,n/2)$.
\end{itemize}
The pairs  described in  (b.1) to (b.4) are pairwise different, and $QB(n)$ contains no other member.
\end{enumerate}
\end{prop}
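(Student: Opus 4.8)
The plan is to read the complete characterization of Proposition~\ref{prop:eqC} off the binary digits $m_1>\cdots>m_\ell=0$ of $n_0=n/2^k$, matching its alternatives (option~(1) and the three cases of option~(2)) to the families (b.1)--(b.4). I would begin with the degenerate case (a): if $\ell=1$ then $n=2^k$ is a power of $2$, so $C(n)=0$ by Lemma~\ref{lem:C=0}, and for $(n_1,n_2)\in QB(n)$ the equality $C(n_1)+C(n_2)+(n_1-n_2)=0$ forces all three nonnegative summands to vanish; hence $n_1=n_2=n/2$ is the only possibility and $QB(n)=\{(n/2,n/2)\}$. Equivalently, $n_0=1$ admits no parameters satisfying option~(2) of Proposition~\ref{prop:eqC}, while option~(1) contributes the balanced pair.

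For $\ell>1$ I would first record that, in each of the three cases of option~(2), the quantity inside the outermost factor $2^k$ is odd, so $k$ equals the $2$-adic valuation of $n$ and of $n_1-n_2$; thus the parameter $k$ of Proposition~\ref{prop:eqC} coincides with the valuation used in the statement, and $n_0=n/2^k$. The balanced pair $(n/2,n/2)$ of option~(1) is an integer pair exactly when $n$ is even, i.e.\ when $k\geq1$, giving (b.4). The case $n=2^k(2p+1)$ forces $n_0=2p+1$, hence $p=\sum_{i=1}^{\ell-1}2^{m_i-1}$, and substituting reproduces verbatim the pair of (b.1); it occurs precisely when $p\geq1$, i.e.\ when $\ell>1$.

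The core of the argument is the binary bookkeeping for the two remaining cases. For $n=2^k\big(2^l(2p+1)+(2t+1)\big)$ I would write $n_0=2^{l+1}p+2^l+(2t+1)$ and observe that, since $0\leq t<2^{l-2}$, the summand $2t+1$ occupies only bit positions $\leq l-2$; therefore bit $l$ of $n_0$ is set, bit $l-1$ is clear, and $p$ supplies the bits in positions $\geq l+1$. Consequently the bit at position $l$ is a digit $m_j$ with $m_{j+1}\leq l-2=m_j-2$, which is exactly the gap condition $m_j>m_{j+1}+1$ of (b.2), and unwinding $n_1=2^{k+l}(p+1)$ gives $2^k\big(\sum_{i<j}2^{m_i-1}+2^{m_j}\big)$. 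Conversely each such $j$ determines $l=m_j$, $p=\sum_{i<j}2^{m_i-m_j-1}\geq1$ and $2t+1=\sum_{i>j}2^{m_i}<2^{l-1}$ within the admissible ranges. The case $n=2^k\big(2^l(2p+1)-(2t+1)\big)$ is symmetric: $n_0=2^{l+1}p+\big(2^l-(2t+1)\big)$, the block $2^l-(2t+1)$ is odd with top bit at position $l-1$, so bit $l$ is clear and the top low digit sits at $l-1=m_j$ with $m_{j-1}\geq l+1=m_j+2$, i.e.\ $m_j<m_{j-1}-1$, the gap condition of (b.3); unwinding $n_2=2^{k+l}p$ gives $2^k\sum_{i<j}2^{m_i-1}$, and the reverse assignment $l=m_j+1$, $p=\sum_{i<j}2^{m_i-m_j-2}$, $2t+1=2^l-\sum_{i\geq j}2^{m_i}$ again lands in range. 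I expect this carry/borrow analysis, together with checking that the maps $(l,p,t)\leftrightarrow j$ are mutually inverse, to be the main technical obstacle.

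Finally, for distinctness and exhaustiveness I would argue on the parameter side, so as to avoid comparing the explicit pairs directly. From any pair with $n_1>n_2$ one recovers $k=v_2(n_1-n_2)$; then $(n_1-n_2)/2^k$ is odd and equals $1$ in the case of (b.1), whereas in the cases of (b.2) and (b.3) it lies strictly between $2^{l-1}$ and $2^l$ (because $2^{l-1}<2^l-(2t+1)<2^l$), so it separates (b.1) from the rest and then determines $l$ and $t$, after which $n_2$ determines $p$. Hence each case produces its pairs injectively; the cases of (b.2) and (b.3) are moreover disjoint because the smaller coordinate has $2$-adic valuation exactly $k$ in the former but $\geq k+l$ in the latter. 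Since Proposition~\ref{prop:eqC} guarantees that these three cases list \emph{all} pairs of $QB(n)$ with $n_1>n_2$, the families (b.1), (b.2), (b.3) are pairwise disjoint and exhaust them; adjoining (b.4), the unique member with $n_1=n_2$, completes the enumeration and shows there are no further members.
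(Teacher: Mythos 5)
Your proposal is correct and follows essentially the same route as the paper: both derive the result by translating the three parametric families of Proposition~\ref{prop:eqC} into conditions on the binary digits of $n_0$, with the gap conditions $m_j>m_{j+1}+1$ and $m_j<m_{j-1}-1$ arising from the requirement that bit $l$ be set and bit $l-1$ clear (resp.\ vice versa). The only (minor, and arguably cleaner) divergences are that you obtain the gap conditions directly from the bit pattern rather than via the paper's deduplication of coincident candidate pairs, and that you prove pairwise distinctness by recovering the parameters $(k,l,t,p)$ from a pair instead of comparing the explicit pairs as the paper does.
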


\begin{proof}
Assertion (a) is obvious by Lemma \ref{lem:C=0}. So, assume henceforth that $\ell>1$. Let now $(n_1,n_2)$ such that $n=n_1+n_2$ and $1\leq n_2< n_1$. Then, by Proposition \ref{prop:eqC}, $(n_1,n_2)\in QB(n)$ if, and only if,  one of the following three conditions is satisfied:
\begin{itemize}
\item[(b.1)] There exist $k\geq 0$ and $p\geq 1$ such that $n_0=2p+1$, $n_1=2^k(p+1)$, and $n_2=2^{k}p$.
In this case 
$p=(n_0-1)/{2}=\sum_{i=1}^{\ell-1} 2^{m_i-1}$
 and this contributes to $QB(n)$ the pair
$(n_1,n_2)$ with
$$
n_1=2^{k}\Big(\sum_{i=1}^{\ell-1} 2^{m_i-1}+1\Big),\quad n_2=2^{k}\sum_{i=1}^{\ell-1} 2^{m_i-1}.
$$

\item[(b.2)] There exist $k\geq 0$, $l\geq 2$, $p\geq 1$, and  $0\leq t<2^{l-2}$ such that  $n_0=2^{l+1}p+2^l+2t+1$ and $n_1=2^{k+l}(p+1)$. Now, if $t<2^{l-2}$ and $p\geq 1$, then $2t+1<2^{l-1}$ and $2^{l+1}p\geq 2^{l+1}$. Therefore, the equality $$2^{l+1}p+2^l+2t+1=\sum_{i=1}^\ell 2^{m_i}$$ holds for some $p\geq 1$ and $t<2^{l-2}$ if, and only if,  $m_j=l\geq 2$ for some $j=2,\ldots,\ell-1$, in which case  $p=\big(\sum_{i=1}^{j-1} 2^{m_i}\big)/2^{m_j+1}$. This contributes to $QB(n)$ 
all pairs $(n_1,n_2)$ of the form 
\begin{equation}
\displaystyle n_1=2^{k+m_j}\Big(\frac{\sum_{i=1}^{j-1} 2^{m_i}}{2^{m_j+1}}+1\Big)=2^{k}\Big(\sum_{i=1}^{j-1} 2^{m_i-1}+2^{m_j}\Big),\
\displaystyle  n_2=n-2^{k}\Big(\sum_{i=1}^{j-1} 2^{m_i-1}+2^{m_j}\Big),
\label{eq:QB.b.2}
\end{equation}
with $j=2,\ldots,\ell-1$ and $m_j\geq 2$. But not all these pairs are different, because
$\sum_{i=1}^{j-1} 2^{m_i-1}+2^{m_j}=\sum_{i=1}^{j} 2^{m_i-1}+2^{m_{j+1}}$ if, and only if, $m_j=m_{j+1}+1$.
Indeed,
\begin{align*}
\sum_{i=1}^{j-1} 2^{m_i-1}+2^{m_j} & =\sum_{i=1}^{j} 2^{m_i-1}+2^{m_{j+1}}  \Longleftrightarrow 
 2^{m_j}=2^{m_j-1}+2^{m_{j+1}} \\
&\qquad  \Longleftrightarrow 2^{m_j-1}=2^{m_{j+1}}\Longleftrightarrow m_j=m_{j+1}+1.
\end{align*}
This entails that each sequence of values of consecutive exponents $m_j$ that fiffer only in 1  yield the same $n_1$, and hence the same pair $(n_1,n_2)$. On the other hand, $\sum_{i=1}^{j-1} 2^{m_i-1}+2^{m_j}$ is clearly monotonously non-increasing on $j$ (because $m_{j+1}< m_j$), and therefore, by the previous equivalence, its value jumps at each $j$ such that $m_j>m_{j+1}+1$. Finally, if it happens that $m_{\ell-2}=2$, then $m_{\ell-1}=1=m_{\ell-2}-1$ and $m_\ell=0=m_{\ell-1}-1$ and hence, as we have just seen,
$$
\sum_{i=1}^{\ell-3} 2^{m_i-1}+2^{m_{\ell-2}}=\sum_{i=1}^{\ell-2} 2^{m_i-1}+2^{m_{\ell-1}}=\sum_{i=1}^{\ell-1} 2^{m_i-1}+2^{m_{\ell}}=\sum_{i=1}^{\ell-1} 2^{m_i-1}+1
$$
and the value of $n_1$ that we obtain in this case, $2^k\big(\sum_{i=1}^{\ell-1} 2^{m_i-1}+1\big)$, is the one already given in (b.1), so we can omit it. In summary, the new pairwise different pairs $(n_1,n_2)$ of this form are obtained by taking in (\ref{eq:QB.b.2}) as $l$ the exponents $m_j$ with $j=2,\ldots,\ell-1$ such that $m_j>m_{j+1}+1$.

\item[(b.3)] There exist $k\geq 0$, $l\geq 2$, $p\geq 1$, and  $0\leq t<2^{l-2}$ such that $n_0=2^{l+1}p+2^l-(2t+1)$ and  $n_2=2^{k+l}p$. Since $t<2^{l-2}$, we have that $n_0=2^{l+1}p+2^{l-1}+2t_0+1$ with $2t_0+1<2^{l-1}$. Then, the equality $$2^{l+1}p+2^{l-1}+2t_0+1=\sum_{i=1}^\ell 2^{m_i}$$ holds for some $p\geq 1$ and $t_0<2^{l-2}$  if, and only if,  $l-1=m_j$ for some $j=2,\ldots,\ell-1$ such that $m_{j-1}>m_j+1$, and then  $$p=\frac{\sum_{i=1}^{j-1} 2^{m_i}}{2^{m_j+2}}.$$ This contributes to $QB(n)$  all pairs $(n_1,n_2)$ of the form 
$$
n_2=2^{k+m_j+1}\Big(\frac{\sum_{i=1}^{j-1} 2^{m_i}}{2^{m_j+2}}\Big)=2^k\sum_{i=1}^{j-1} 2^{m_i-1},\quad n_1=n-2^k\sum_{i=1}^{j-1} 2^{m_i-1},
$$
with $j=2,\ldots,\ell-1$ such that $m_j<m_{j-1}-1$, belong to $QB(n)$, and they are pairwise different. 
\end{itemize}
This gives all pairs $(n_1,n_2)$ in $QB(n)$ with $n_1>n_2$. If $n$ is even, we must add moreover to $QB(n)$ the pair $(n/2,n/2)$ and this completes the set of pairs belonging to $QB(n)$. To finish the proof of the statement, we must check that these pairs are pairwise different.

Now, along our construction we have already checked that the pairs of the form (b.2), as well as those of the form (b.3),  are pairwise different. We have also checked that the pairs of the form (b.2) and different from the pair (b.1). 
The pairs of the form (b.3)  are also different from the pair (b.1), because, since $j\leq \ell-1$, their entry $n_2$ is strictly smaller than the corresponding entry in (b.1). On the other hand, if the pair 
$(n/2,n/2)$ is added to $QB(n)$, it is not of the form (b.1) to (b.3), because all these pairs have both entries divisible by $2^k$, while  the maximum power of 2 that divides $n/2$ is $2^{k-1}$. Finally, if $(n_1,n_2)$ is a pair of the form (b.2), then $n_1/2^k$ is even and $n_2/2^k$ is odd, while   if $(n_1,n_2)$ is a pair of the form (b.3), then $n_1/2^k$ is odd and $n_2/2^k$ is even. Therefore, no pair can be simultaneously of the form (b.2) and (b.3).  
\end{proof}

\begin{exm}
Let us find $QB(214)$. Since $214=2(2^6+2^5+2^3+2+1)$, with the notations of the last corollary we have that $k=1$, $\ell=5$, $m_1=6$, $m_2=5$, $m_3=3$, $m_4=1$, and $m_5=0$. Then:
\begin{itemize}
\item[(b.1)]  The pair of this type  in $QB(214)$ is 
$\big(2^k(\sum_{i=1}^4 2^{m_i-1}+1),
2^k\sum_{i=1}^4 2^{m_i-1}\big)=(108,106)$.

\item[(b.2)] The indices $j\in \{2,3,4\}$ such that $m_j>m_{j+1}+1$ are 2 and 3. Therefore, the pairs of this type  in $QB(214)$ are:
\begin{itemize}
\item For $j=2$, $\big(2^{k}(2^{m_1-1}+2^{m_2}),n-2^{k}(2^{m_1-1}+2^{m_2})\big)=(128,86)$.

\item For $j=3$, $\big(2^{k}(2^{m_1-1}+2^{m_2-1}+2^{m_3}),n-2^{k}(2^{m_1-1}+2^{m_2-1}+2^{m_3})\big)=(112,102)$.

\end{itemize}

\item[(b.3)] The indices $j\in \{2,3,4\}$ such $m_j<m_{j-1}-1$ are 3 and 4. Therefore, the pairs of this type  in $QB(214)$ are:
\begin{itemize}
\item For $j=3$, $\big(n-2^k(2^{m_1-1}+2^{m_2-1}),2^k(2^{m_1-1}+2^{m_2-1})\big)=(118,96)$.

\item For $j=4$, $\big(n-2^k(2^{m_1-1}+2^{m_2-1}+2^{m_3-1}),2^k(2^{m_1-1}+2^{m_2-1}+2^{m_3-1})\big)=(110,104)$.
\end{itemize}

\item[(b.4)]  Since $214=2\cdot 107$ is even, $QB(214)$ contains the pair $(107,107)$.

\end{itemize}
Therefore
$$
QB(214)=\big\{(107,107),(108,106),(110,104),(112,102),(118,96),(128,86)\big\}.
$$
\end{exm}

\begin{cor}
For every $n\geq 2$, the cardinality of $QB(n)$ is at most $ \floor*{\log_2(n)}$.
\end{cor}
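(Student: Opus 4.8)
The plan is to read $|QB(n)|$ directly off the enumeration in Proposition~\ref{cor:QB} and compare it with $\lfloor\log_2 n\rfloor$. First I would record that $\lfloor\log_2 n\rfloor = k+m_1$: writing $n=2^k n_0$ with $n_0$ odd and $2^{m_1}\leq n_0<2^{m_1+1}$ yields $2^{k+m_1}\leq n<2^{k+m_1+1}$. The degenerate case $\ell=1$ (that is, $n=2^k$ with $k\geq 1$) is immediate, since then $QB(n)=\{(n/2,n/2)\}$ has a single element and $\lfloor\log_2 n\rfloor=k\geq 1$. So the work lies in the case $\ell\geq 2$, where Proposition~\ref{cor:QB} asserts that $QB(n)$ is the disjoint union of the one pair from (b.1), the pairs from (b.2), the pairs from (b.3), and possibly the pair $(n/2,n/2)$ from (b.4); hence $|QB(n)| = 1 + A + B + \varepsilon$, where $A$ and $B$ count the pairs of type (b.2) and (b.3) and $\varepsilon\in\{0,1\}$ equals $1$ exactly when $k\geq 1$.

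The next step is to rephrase $A$ and $B$ as counts of ``large gaps'' in the binary exponents. Setting $d_j=m_j-m_{j+1}\geq 1$ for $1\leq j\leq \ell-1$, telescoping and $m_\ell=0$ give $m_1=\sum_{j=1}^{\ell-1}d_j$. By (b.2), $A$ counts the indices $j\in\{2,\dots,\ell-1\}$ with $d_j\geq 2$, and by (b.3), $B$ counts the indices $j\in\{2,\dots,\ell-1\}$ with $d_{j-1}\geq 2$, i.e.\ the indices $i\in\{1,\dots,\ell-2\}$ with $d_i\geq 2$. Thus the indicator $[d_j\geq 2]$ gets counted once for $j=1$ and for $j=\ell-1$, but twice for each interior index $2\leq j\leq \ell-2$.

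The heart of the argument is the inequality $1+A+B\leq m_1$. I would combine the expression for $A+B$ above with the elementary bound $d_j\geq 1+[d_j\geq 2]$, which gives $m_1=\sum_{j=1}^{\ell-1}d_j\geq (\ell-1)+\sum_{j=1}^{\ell-1}[d_j\geq 2]$. Subtracting $A+B$ then leaves $m_1-(A+B)\geq (\ell-1)-\sum_{j=2}^{\ell-2}[d_j\geq 2]$, and since this last sum has at most $\ell-2$ terms (it is empty for $\ell\leq 3$) and each term is at most $1$, the right-hand side is $\geq (\ell-1)-(\ell-2)=1$. Hence $1+A+B\leq m_1$. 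The only real subtlety I anticipate is precisely this double-counting of the interior gaps: a term-by-term bound on $A+B$ loses a factor of two, and one must observe that the doubly-counted gaps are exactly the interior ones, whose number $\ell-3$ is dominated by the telescoping contribution $\ell-1$.

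Finally I would assemble the pieces: since $\varepsilon\leq k$ (because $\varepsilon=1$ forces $k\geq 1$), we obtain $|QB(n)|=1+A+B+\varepsilon\leq m_1+k=\lfloor\log_2 n\rfloor$, as desired. I would also note that the case $\ell=2$ requires no separate treatment, since there $A=B=0$ while $m_1\geq 1$, so it is already covered by the same inequality.
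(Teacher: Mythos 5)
Your proof is correct, and it reaches the bound by a genuinely different (and in my view cleaner) route than the paper. Both arguments start from the same enumeration of $QB(n)$ in Proposition \ref{cor:QB}, so both reduce to bounding $1+A+B+\varepsilon$, where $A$ and $B$ count the pairs of types (b.2) and (b.3). The paper translates these counts into the number $M_0(n)$ of maximal runs of zeroes in the binary representation of $n$, writes $|QB(n)|=2M_0(n)+1$ minus a correction for ``forbidden'' runs, and then closes with a pigeonhole comparison of $M_0(n)$ against $\floor*{\log_2 n}$ that requires a case split on the parity of $\floor*{\log_2 n}$ and a further subdivision according to how $n_{(2)}$ begins and ends. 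You instead factor out $2^k$ first, work with the exponent gaps $d_j=m_j-m_{j+1}$ of the odd part, and exploit the telescoping identity $m_1=\sum_{j=1}^{\ell-1}d_j$ together with the pointwise bound $d_j\geq 1+[d_j\geq 2]$; the double-counting of the interior gaps, which you correctly identify as the only delicate point, is absorbed by the slack $(\ell-1)-(\ell-3)$ coming from the telescoping. This eliminates the parity case analysis entirely and isolates the identity $\floor*{\log_2 n}=k+m_1$ as the only place where the logarithm enters. (Two small remarks: your identity $A+B=\sum_{j=1}^{\ell-1}[d_j\geq 2]+\sum_{j=2}^{\ell-2}[d_j\geq 2]$ degenerates at $\ell=2$, but there $A=B=0$ and $m_1\geq 1$ make the bound trivial, as you note; and the interior sum actually has at most $\ell-3$ terms rather than $\ell-2$, which only strengthens your inequality.)
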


\begin{proof}
Let $n_{(2)}$ denote the binary representation of $n$. If $n$ is a power of 2, then $|QB(n)|=1\leq \floor*{\log_2(n)}$. Assume henceforth that $n$ is not a power of 2. In this case, by construction, the number of pairs of type (b.2) in $QB(n)$ is the number of maximal sequences of zeroes in $n_{(2)}$ that do not end immediately before the last 1 or  in the units position;  the number of pairs of type (b.3) in $QB(n)$ is the number of maximal sequences of zeroes in $n_{(2)}$ that do not start immediately after the leading 1 or that do not end in the units position;  there is one pair of type (b.4) in $QB(n)$ if $n_{(2)}$ contains a sequence of zeroes ending in the units position; and $QB(n)$ always contains a pair of the form (b.1). 
So, if we denote by $M_0(n)$ the number of maximal sequences of zeroes in $n_{(2)}$,  to compute the cardinality $|QB(n)|$:
\begin{itemize}
\item We count twice the number of  maximal sequences of zeroes in $n_{(2)}$ plus 1, $2M_0(n)+1$
\item We subtract 1 if $n_{(2)}$ contains a maximal sequence of zeroes  starting immediately after the leading 1
\item We subtract 1 if $n_{(2)}$ contains a maximal sequence of zeroes  ending immediately before the last 1
\item We subtract 2 and we add 1 (i.e., we subtract 1)  if $n_{(2)}$ contains a maximal sequence of zeroes  ending in the units position
\end{itemize}
So, if, to simplify the language, we call \emph{forbidden} any maximal sequence of zeroes in $n_{(2)}$ that starts immediately after the leading 1 or ends immediately before the last 1 or in the units position, we have the formula
$$
\begin{array}{rl}
|QB(n)|=2M_0(n)+1 &\mbox{minus the number of forbidden maximal sequences}\\ &\mbox{of zeroes in $n_{(2)}$,}
\end{array}
$$
where in the subtraction we count each forbidden maximal sequence of zeroes as many times as it satisfies a ``forbidden'' property. So, a maximal sequence of zeroes starting immediately after the leading 1 and  ending immediately before the last 1 subtracts 2. 

Now, on the one hand, if $\floor*{\log_2(n)}$ is an even number, by the pigeonhole principle we have that $M_0(n)\leq \floor*{\log_2(n)}/2$. But if $n_{(2)}$ does not contain any forbidden maximal sequence of zeroes, then $n_{(2)}$ starts with $11$ and ends with $11$ and the number of maximal sequences of zeroes in such an $n_{(2)}$ is at most $\floor*{\log_2(n)}/2-1$. So, if $M_0(n)=\floor*{\log_2(n)}/2$, then $n_{(2)}$ contains some forbidden maximal sequence of zeroes and then
$|QB(n)|\leq 2M_0(n)=\floor*{\log_2(n)}$, while if $M_0(n)\leq \floor*{\log_2(n)}/2-1$, then $|QB(n)|\leq 2M_0(n)+1\leq \floor*{\log_2(n)}-1$.

On the other hand,  if $\floor*{\log_2(n)}$ is an odd number, again by the pigeonhole principle we have that $M_0(n)\leq (\floor*{\log_2(n)}+1)/2$. Now, if $M_0(n)=(\floor*{\log_2(n)}+1)/2$, then $n_{(2)}$ contains at least 2 forbidden maximal sequences of zeroes. Indeed, let $\floor*{\log_2(n)}=2s+1$. If $n_{(2)}$ starts with $11$, avoiding a forbidden maximal sequence of zeroes at the beginning, then $M_0(n)\leq s=(\floor*{\log_2(n)}-1)/2$.  On the other hand, if it ends in $11$, avoiding a forbidden maximal sequence of zeroes at the end, then again $M_0(n)\leq s=(\floor*{\log_2(n)}-1)/2$.
So, to reach the maximum value of $M_0(n)$, $n_{(2)}$ must start with $10$ and end with $10$, $01$ or $00$, thus having at least 2 forbidden maximal sequences of zeroes. 
Thus, if $M_0(n)=(\floor*{\log_2(n)}+1)/2$, then
$|QB(n)|\leq 2M_0(n)-1=\floor*{\log_2(n)}$, while if $M_0(n)\leq (\floor*{\log_2(n)}+1)/2-1$, then $|QB(n)|\leq 2M_0(n)+1\leq \floor*{\log_2(n)}$. 
\end{proof}

Proposition \ref{lem:charmin1}, together with Lemma \ref{lem:C=0}, provide the following algorithm to produce all trees $T\in \TT_n$ such that $C(T)=C(n)$, which is reminiscent of Aldous' $\beta$-model~\citep{Ald1}.
\medskip

\noindent\textbf{Algorithm MinColless.}

\begin{enumerate}
\item[1)] Start with a single node labeled $n$.
\item[2)] While  the current tree contains labeled leaves, choose a leaf with label $m$.
\begin{itemize}
\item[2.a)] If $m$ is a power of 2, replace this leaf by a fully bifurcating tree $B_m$ with its nodes unlabeled.
\item[2.b)] If $m$ is a not a power of 2:
\begin{itemize}
\item[2.b.i)] Find a pair of integers $(m_1,m_2)\in QB(m)$.
\item[2.b.ii)] Split the leaf labeled $m$ into a cherry with unlabeled root and its leaves labeled $m_1$ and $m_2$, respectively.
\end{itemize}
\end{itemize}
\end{enumerate}

\begin{exm}\label{ex:2}
Let us use Algorithm MinColless to find all bifurcating trees with 20 leaves and minimum Colless index; we describe the trees by means of the usual Newick format,\footnote{See \url{http://evolution.genetics.washington.edu/phylip/newicktree.html}}   with the unlabeled leaves represented by a symbol $*$ and omitting the semicolon ending mark in order not to confuse it with a punctuation mark. 
\begin{itemize}
\item[1)] We start with a single node labeled 20. 
\item[2)] Since $QB(20)=\{(10,10),(12,8)\}$, this node splits into the cherries $(10,10)$ and $(12,8)$.
\item[3.1)] Since $QB(10)=\{(5,5),(6,4)\}$, the different ways of splitting the leaves of the tree $(10,10)$ produce the trees $((5,5),(5,5))$, $((5,5),(6,4))$, and $((6,4),(6,4))$.
Now, since $QB(5)=\{(3,2)\}$, $QB(6)=\{(3,3),(4,2)\}$, and $QB(3)=\{(2,1)\}$, and 1, 2,  and 4 are powers of 2,
we have the following derivations from these trees through all possible combinations of splitting the leaves in the trees:
$$
\begin{array}{l}
((5,5),(5,5))  \Rightarrow (((3,2),(3,2)),((3,2),(3,2)))  \Rightarrow ((((2,1),2),((2,1),2)),(((2,1),2),((2,1),2)))\\ \qquad \Rightarrow 
(((((*,*),*),(*,*)),(((*,*),*),(*,*))),((((*,*),*),(*,*)),(((*,*),*),(*,*))))\\
((5,5),(6,4)) \Rightarrow (((3,2),(3,2)),((3,3),4))  \Rightarrow  ((((2,1),2),((2,1),2)),(((2,1),(2,1)),4))\\ \qquad\Rightarrow  (((((*,*),*),(*,*)),(((*,*),*),(*,*))),((((*,*),*),((*,*),*)),((*,*),(*,*)))\\
((5,5),(6,4)) \Rightarrow (((3,2),(3,2)),((4,2),4))  \Rightarrow  ((((2,1),2),((2,1),2)),((4,2),4))\\ \qquad\Rightarrow  (((((*,*),*),(*,*)),(((*,*),*),(*,*))),((((*,*),(*,*)),(*,*)),((*,*),(*,*))))\\
((6,4),(6,4)) \Rightarrow (((3,3),4),((3,3),4))  \Rightarrow  ((((2,1),(2,1)),4),(((2,1),(2,1)),4))\\ \qquad \Rightarrow  (((((*,*),*),((*,*),*)),((*,*),(*,*))),((((*,*),*),((*,*),*)),((*,*),(*,*))))\\
((6,4),(6,4)) \Rightarrow (((3,3),4),((4,2),4))  \Rightarrow  ((((2,1),(2,1)),4),((4,2),4))\\ \qquad  \Rightarrow  (((((*,*),*),((*,*),*)),((*,*),(*,*))),((((*,*),(*,*)),(*,*)),((*,*),(*,*))))\\
((6,4),(6,4))  \Rightarrow (((4,2),4),((4,2),4)) \\ \qquad \Rightarrow  (((((*,*),(*,*)),(*,*)),((*,*),(*,*))),((((*,*),(*,*)),(*,*)),((*,*),(*,*))))
\end{array}
$$
\item[3.2)] Since $QB(12)=\{(6,6),(8,4)\}$ and 8 is a power of 2, the tree $(12,8)$ gives rise to the trees
$((6,6),8)$ and  $((8,4),8)$, and then, using $QB(6)=\{(3,3),(4,2)\}$ and $QB(3)=\{(2,1)\}$,
$$
\begin{array}{l}
((6,6),8)\Rightarrow (((3,3),(3,3)),8) \Rightarrow ((((2,1),(2,1)),((2,1),(2,1))),8) \\ \qquad \Rightarrow (((((*,*),*),((*,*),*)),(((*,*),*),((*,*),*))),(((*,*),(*,*)),((*,*),(*,*))))\\
((6,6),8)  \Rightarrow (((3,3),(4,2)),8)\Rightarrow ((((2,1),(2,1)),(4,2)),8) \\ \qquad \Rightarrow (((((*,*),*),((*,*),*)),(((*,*),(*,*)),(*,*))),(((*,*),(*,*)),((*,*),(*,*))))\\
((6,6),8)  \Rightarrow (((4,2),(4,2)),8)  \\ \qquad\Rightarrow (((((*,*),(*,*)),(*,*)),(((*,*),(*,*)),(*,*))),(((*,*),(*,*)),((*,*),(*,*))))\\
((8,4),8) \Rightarrow (((((*,*),(*,*)),((*,*),(*,*))),((*,*),(*,*))),(((*,*),(*,*)),((*,*),(*,*))))
\end{array}
$$
\end{itemize}
So, there are 10 different trees in $\TT_{20}$ with minimum Colles index.
\end{exm}

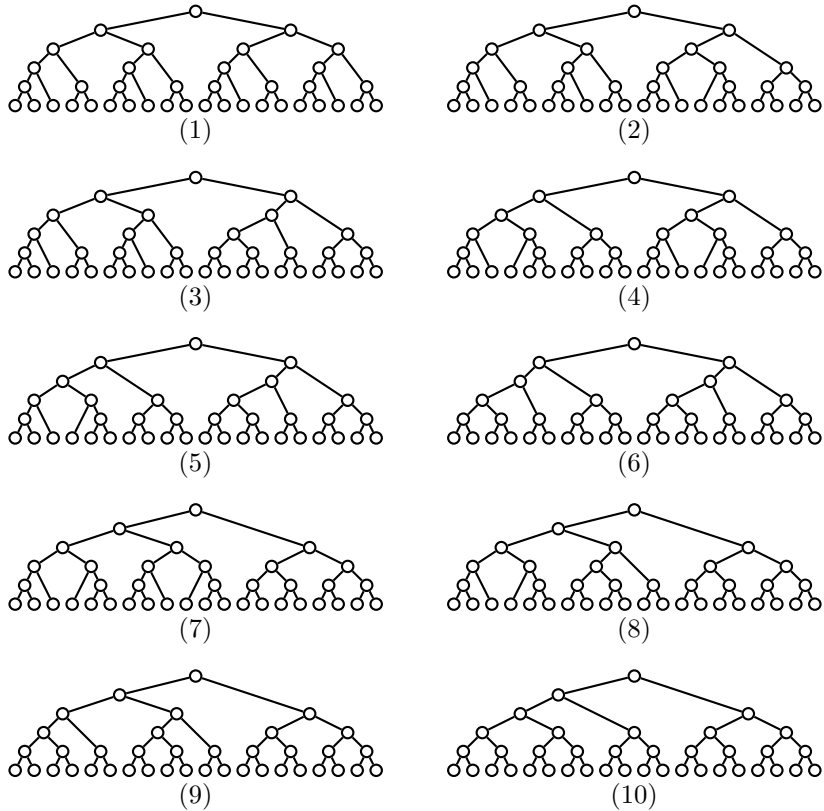
\begin{figure}[htb]
\begin{center}
\begin{tabular}{ccc}
\begin{tikzpicture}[thick,>=stealth,scale=0.25]
\draw(1,0) node[trepp] (l1) {}; 
\draw(2,0) node[trepp] (l2) {}; 
\draw(3,0) node[trepp] (l3) {}; 
\draw(4,0) node[trepp] (l4) {}; 
\draw(5,0) node[trepp] (l5) {}; 
\draw(6,0) node[trepp] (l6) {}; 
\draw(7,0) node[trepp] (l7) {}; 
\draw(8,0) node[trepp] (l8) {}; 
\draw(9,0) node[trepp] (l9) {}; 
\draw(10,0) node[trepp] (l10) {}; 
\draw(11,0) node[trepp] (l11) {}; 
\draw(12,0) node[trepp] (l12) {}; 
\draw(13,0) node[trepp] (l13) {}; 
\draw(14,0) node[trepp] (l14) {}; 
\draw(15,0) node[trepp] (l15) {}; 
\draw(16,0) node[trepp] (l16) {}; 
\draw(17,0) node[trepp] (l17) {}; 
\draw(18,0) node[trepp] (l18) {}; 
\draw(19,0) node[trepp] (l19) {}; 
\draw(20,0) node[trepp] (l20) {}; 
\draw(1.5,1) node[trepp] (v1) {}; 
\draw(4.5,1) node[trepp] (v2) {}; 
\draw(6.5,1) node[trepp] (v3) {}; 
\draw(9.5,1) node[trepp] (v4) {}; 
\draw(11.5,1) node[trepp] (v5) {}; 
\draw(14.5,1) node[trepp] (v6) {}; 
\draw(16.5,1) node[trepp] (v7) {}; 
\draw(19.5,1) node[trepp] (v8) {}; 
\draw(2,2) node[trepp] (w1) {}; 
\draw(7,2) node[trepp] (w2) {}; 
\draw(12,2) node[trepp] (w3) {}; 
\draw(17,2) node[trepp] (w4) {}; 
\draw(3,3) node[trepp] (x1) {}; 
\draw(8,3) node[trepp] (x2) {}; 
\draw(13,3) node[trepp] (x3) {}; 
\draw(18,3) node[trepp] (x4) {}; 
\draw(5.5,4) node[trepp] (y1) {}; 
\draw(15.5,4) node[trepp] (y2) {}; 
\draw(10.5,5) node[trepp] (r) {}; 
\draw (r)--(y1);
\draw (r)--(y2);
\draw (y1)--(x1);
\draw (y1)--(x2);
\draw (y2)--(x3);
\draw (y2)--(x4);
\draw (x1)--(w1);
\draw (x2)--(w2);
\draw (x3)--(w3);
\draw (x4)--(w4);
\draw (x1)--(v2);
\draw (x2)--(v4);
\draw (x3)--(v6);
\draw (x4)--(v8);
\draw (w1)--(v1);
\draw (w1)--(l3);
\draw (w2)--(v3);
\draw (w2)--(l8);
\draw (w3)--(v5);
\draw (w3)--(l13);
\draw (w4)--(v7);
\draw (w4)--(l18);
\draw (v1)--(l1);
\draw (v1)--(l2);
\draw (v2)--(l4);
\draw (v2)--(l5);
\draw (v3)--(l6);
\draw (v3)--(l7);
\draw (v4)--(l9);
\draw (v4)--(l10);
\draw (v5)--(l11);
\draw (v5)--(l12);
\draw (v6)--(l14);
\draw (v6)--(l15);
\draw (v7)--(l16);
\draw (v7)--(l17);
\draw (v8)--(l19);
\draw (v8)--(l20);
\end{tikzpicture}
&\quad&
\begin{tikzpicture}[thick,>=stealth,scale=0.25]
\draw(1,0) node[trepp] (l1) {}; 
\draw(2,0) node[trepp] (l2) {}; 
\draw(3,0) node[trepp] (l3) {}; 
\draw(4,0) node[trepp] (l4) {}; 
\draw(5,0) node[trepp] (l5) {}; 
\draw(6,0) node[trepp] (l6) {}; 
\draw(7,0) node[trepp] (l7) {}; 
\draw(8,0) node[trepp] (l8) {}; 
\draw(9,0) node[trepp] (l9) {}; 
\draw(10,0) node[trepp] (l10) {}; 
\draw(11,0) node[trepp] (l11) {}; 
\draw(12,0) node[trepp] (l12) {}; 
\draw(13,0) node[trepp] (l13) {}; 
\draw(14,0) node[trepp] (l14) {}; 
\draw(15,0) node[trepp] (l15) {}; 
\draw(16,0) node[trepp] (l16) {}; 
\draw(17,0) node[trepp] (l17) {}; 
\draw(18,0) node[trepp] (l18) {}; 
\draw(19,0) node[trepp] (l19) {}; 
\draw(20,0) node[trepp] (l20) {}; 
\draw(1.5,1) node[trepp] (v1) {}; 
\draw(4.5,1) node[trepp] (v2) {}; 
\draw(6.5,1) node[trepp] (v3) {}; 
\draw(9.5,1) node[trepp] (v4) {}; 
\draw(11.5,1) node[trepp] (v5) {}; 
\draw(15.5,1) node[trepp] (v6) {}; 
\draw(17.5,1) node[trepp] (v7) {}; 
\draw(19.5,1) node[trepp] (v8) {}; 
\draw(2,2) node[trepp] (w1) {}; 
\draw(7,2) node[trepp] (w2) {}; 
\draw(12,2) node[trepp] (w3) {}; 
\draw(15,2) node[trepp] (w4) {}; 
\draw(3,3) node[trepp] (x1) {}; 
\draw(8,3) node[trepp] (x2) {}; 
\draw(13.5,3) node[trepp] (x3) {}; 
\draw(18.5,2) node[trepp] (x4) {}; 
\draw(5.5,4) node[trepp] (y1) {}; 
\draw(15.5,4) node[trepp] (y2) {}; 
\draw(10.5,5) node[trepp] (r) {}; 
\draw (r)--(y1);
\draw (r)--(y2);
\draw (y1)--(x1);
\draw (y1)--(x2);
\draw (y2)--(x3);
\draw (y2)--(x4);
\draw (x1)--(w1);
\draw (x2)--(w2);
\draw (x3)--(w3);
\draw (x3)--(w4);
\draw (x1)--(v2);
\draw (x2)--(v4);
\draw (x4)--(v7);
\draw (x4)--(v8);
\draw (w1)--(v1);
\draw (w1)--(l3);
\draw (w2)--(v3);
\draw (w2)--(l8);
\draw (w3)--(v5);
\draw (w3)--(l13);
\draw (w4)--(v6);
\draw (w4)--(l14);
\draw (v1)--(l1);
\draw (v1)--(l2);
\draw (v2)--(l4);
\draw (v2)--(l5);
\draw (v3)--(l6);
\draw (v3)--(l7);
\draw (v4)--(l9);
\draw (v4)--(l10);
\draw (v5)--(l11);
\draw (v5)--(l12);
\draw (v6)--(l15);
\draw (v6)--(l16);
\draw (v7)--(l18);
\draw (v7)--(l17);
\draw (v8)--(l19);
\draw (v8)--(l20);
\end{tikzpicture}
\\[-0.5ex]
 (1) & & (2)\\[2ex]
\begin{tikzpicture}[thick,>=stealth,scale=0.25]
\draw(1,0) node[trepp] (l1) {}; 
\draw(2,0) node[trepp] (l2) {}; 
\draw(3,0) node[trepp] (l3) {}; 
\draw(4,0) node[trepp] (l4) {}; 
\draw(5,0) node[trepp] (l5) {}; 
\draw(6,0) node[trepp] (l6) {}; 
\draw(7,0) node[trepp] (l7) {}; 
\draw(8,0) node[trepp] (l8) {}; 
\draw(9,0) node[trepp] (l9) {}; 
\draw(10,0) node[trepp] (l10) {}; 
\draw(11,0) node[trepp] (l11) {}; 
\draw(12,0) node[trepp] (l12) {}; 
\draw(13,0) node[trepp] (l13) {}; 
\draw(14,0) node[trepp] (l14) {}; 
\draw(15,0) node[trepp] (l15) {}; 
\draw(16,0) node[trepp] (l16) {}; 
\draw(17,0) node[trepp] (l17) {}; 
\draw(18,0) node[trepp] (l18) {}; 
\draw(19,0) node[trepp] (l19) {}; 
\draw(20,0) node[trepp] (l20) {}; 
\draw(1.5,1) node[trepp] (v1) {}; 
\draw(4.5,1) node[trepp] (v2) {}; 
\draw(6.5,1) node[trepp] (v3) {}; 
\draw(9.5,1) node[trepp] (v4) {}; 
\draw(11.5,1) node[trepp] (v5) {}; 
\draw(13.5,1) node[trepp] (v6) {}; 
\draw(15.5,1) node[trepp] (v7) {}; 
\draw(17.5,1) node[trepp] (v8) {}; 
\draw(19.5,1) node[trepp] (v9) {}; 
\draw(2,2) node[trepp] (w1) {}; 
\draw(7,2) node[trepp] (w2) {}; 
\draw(12.5,2) node[trepp] (w3) {}; 
\draw(18.5,2) node[trepp] (w4) {}; 
\draw(3,3) node[trepp] (x1) {}; 
\draw(8,3) node[trepp] (x2) {}; 
\draw(14.5,3) node[trepp] (x3) {}; 
\draw(5.5,4) node[trepp] (y1) {}; 
\draw(15.5,4) node[trepp] (y2) {}; 
\draw(10.5,5) node[trepp] (r) {}; 
\draw (r)--(y1);
\draw (r)--(y2);
\draw (y1)--(x1);
\draw (y1)--(x2);
\draw (y2)--(x3);
\draw (y2)--(w4);
\draw (x1)--(w1);
\draw (x2)--(w2);
\draw (x3)--(w3);
\draw (x3)--(v7);
\draw (x1)--(v2);
\draw (x2)--(v4);
\draw (w1)--(v1);
\draw (w1)--(l3);
\draw (w2)--(v3);
\draw (w2)--(l8);
\draw (w3)--(v5);
\draw (w3)--(v6);
\draw (w4)--(v8);
\draw (w4)--(v9);
\draw (v1)--(l1);
\draw (v1)--(l2);
\draw (v2)--(l4);
\draw (v2)--(l5);
\draw (v3)--(l6);
\draw (v3)--(l7);
\draw (v4)--(l9);
\draw (v4)--(l10);
\draw (v5)--(l11);
\draw (v5)--(l12);
\draw (v6)--(l13);
\draw (v6)--(l14);
\draw (v7)--(l15);
\draw (v7)--(l16);
\draw (v8)--(l17);
\draw (v8)--(l18);
\draw (v9)--(l19);
\draw (v9)--(l20);
\end{tikzpicture}
&\quad&
\begin{tikzpicture}[thick,>=stealth,scale=0.25]
\draw(1,0) node[trepp] (l1) {}; 
\draw(2,0) node[trepp] (l2) {}; 
\draw(3,0) node[trepp] (l3) {}; 
\draw(4,0) node[trepp] (l4) {}; 
\draw(5,0) node[trepp] (l5) {}; 
\draw(6,0) node[trepp] (l6) {}; 
\draw(7,0) node[trepp] (l7) {}; 
\draw(8,0) node[trepp] (l8) {}; 
\draw(9,0) node[trepp] (l9) {}; 
\draw(10,0) node[trepp] (l10) {}; 
\draw(11,0) node[trepp] (l11) {}; 
\draw(12,0) node[trepp] (l12) {}; 
\draw(13,0) node[trepp] (l13) {}; 
\draw(14,0) node[trepp] (l14) {}; 
\draw(15,0) node[trepp] (l15) {}; 
\draw(16,0) node[trepp] (l16) {}; 
\draw(17,0) node[trepp] (l17) {}; 
\draw(18,0) node[trepp] (l18) {}; 
\draw(19,0) node[trepp] (l19) {}; 
\draw(20,0) node[trepp] (l20) {}; 
\draw(1.5,1) node[trepp] (v1) {}; 
\draw(5.5,1) node[trepp] (v2) {}; 
\draw(7.5,1) node[trepp] (v3) {}; 
\draw(9.5,1) node[trepp] (v4) {}; 
\draw(11.5,1) node[trepp] (v5) {}; 
\draw(15.5,1) node[trepp] (v6) {}; 
\draw(17.5,1) node[trepp] (v7) {}; 
\draw(19.5,1) node[trepp] (v8) {}; 
\draw(2,2) node[trepp] (w1) {}; 
\draw(5,2) node[trepp] (w2) {}; 
\draw(12,2) node[trepp] (w3) {}; 
\draw(15,2) node[trepp] (w4) {}; 
\draw(3.5,3) node[trepp] (x1) {}; 
\draw(8.5,2) node[trepp] (x2) {}; 
\draw(13.5,3) node[trepp] (x3) {}; 
\draw(18.5,2) node[trepp] (x4) {}; 
\draw(5.5,4) node[trepp] (y1) {}; 
\draw(15.5,4) node[trepp] (y2) {}; 
\draw(10.5,5) node[trepp] (r) {}; 
\draw (r)--(y1);
\draw (r)--(y2);
\draw (y1)--(x1);
\draw (y1)--(x2);
\draw (y2)--(x3);
\draw (y2)--(x4);
\draw (x1)--(w1);
\draw (x1)--(w2);
\draw (x2)--(v3);
\draw (x2)--(v4);
\draw (x3)--(w3);
\draw (x3)--(w4);
\draw (x4)--(v7);
\draw (x4)--(v8);
\draw (w1)--(v1);
\draw (w1)--(l3);
\draw (w2)--(v2);
\draw (w2)--(l4);
\draw (w3)--(v5);
\draw (w3)--(l13);
\draw (w4)--(v6);
\draw (w4)--(l14);
\draw (v1)--(l1);
\draw (v1)--(l2);
\draw (v2)--(l5);
\draw (v2)--(l6);
\draw (v3)--(l7);
\draw (v3)--(l8);
\draw (v4)--(l9);
\draw (v4)--(l10);
\draw (v5)--(l11);
\draw (v5)--(l12);
\draw (v6)--(l15);
\draw (v6)--(l16);
\draw (v7)--(l18);
\draw (v7)--(l17);
\draw (v8)--(l19);
\draw (v8)--(l20);
\end{tikzpicture}
\\[-0.5ex]
 (3) & & (4)\\[2ex]
\begin{tikzpicture}[thick,>=stealth,scale=0.25]
\draw(1,0) node[trepp] (l1) {}; 
\draw(2,0) node[trepp] (l2) {}; 
\draw(3,0) node[trepp] (l3) {}; 
\draw(4,0) node[trepp] (l4) {}; 
\draw(5,0) node[trepp] (l5) {}; 
\draw(6,0) node[trepp] (l6) {}; 
\draw(7,0) node[trepp] (l7) {}; 
\draw(8,0) node[trepp] (l8) {}; 
\draw(9,0) node[trepp] (l9) {}; 
\draw(10,0) node[trepp] (l10) {}; 
\draw(11,0) node[trepp] (l11) {}; 
\draw(12,0) node[trepp] (l12) {}; 
\draw(13,0) node[trepp] (l13) {}; 
\draw(14,0) node[trepp] (l14) {}; 
\draw(15,0) node[trepp] (l15) {}; 
\draw(16,0) node[trepp] (l16) {}; 
\draw(17,0) node[trepp] (l17) {}; 
\draw(18,0) node[trepp] (l18) {}; 
\draw(19,0) node[trepp] (l19) {}; 
\draw(20,0) node[trepp] (l20) {}; 
\draw(1.5,1) node[trepp] (v1) {}; 
\draw(5.5,1) node[trepp] (v2) {}; 
\draw(7.5,1) node[trepp] (v3) {}; 
\draw(9.5,1) node[trepp] (v4) {}; 
\draw(11.5,1) node[trepp] (v5) {}; 
\draw(13.5,1) node[trepp] (v6) {}; 
\draw(15.5,1) node[trepp] (v7) {}; 
\draw(17.5,1) node[trepp] (v8) {}; 
\draw(19.5,1) node[trepp] (v9) {}; 
\draw(2,2) node[trepp] (w1) {}; 
\draw(5,2) node[trepp] (w2) {}; 
\draw(12.5,2) node[trepp] (w3) {}; 
\draw(18.5,2) node[trepp] (w4) {}; 
\draw(3.5,3) node[trepp] (x1) {}; 
\draw(8.5,2) node[trepp] (x2) {}; 
\draw(14.5,3) node[trepp] (x3) {}; 
\draw(5.5,4) node[trepp] (y1) {}; 
\draw(15.5,4) node[trepp] (y2) {}; 
\draw(10.5,5) node[trepp] (r) {}; 
\draw (r)--(y1);
\draw (r)--(y2);
\draw (y1)--(x1);
\draw (y1)--(x2);
\draw (y2)--(x3);
\draw (y2)--(w4);
\draw (x1)--(w1);
\draw (x1)--(w2);
\draw (x2)--(v3);
\draw (x2)--(v4);
\draw (x3)--(w3);
\draw (x3)--(v7);
\draw (w1)--(v1);
\draw (w1)--(l3);
\draw (w2)--(v2);
\draw (w2)--(l4);
\draw (w3)--(v5);
\draw (w3)--(v6);
\draw (w4)--(v8);
\draw (w4)--(v9);
\draw (v1)--(l1);
\draw (v1)--(l2);
\draw (v2)--(l5);
\draw (v2)--(l6);
\draw (v3)--(l7);
\draw (v3)--(l8);
\draw (v4)--(l9);
\draw (v4)--(l10);
\draw (v5)--(l11);
\draw (v5)--(l12);
\draw (v6)--(l13);
\draw (v6)--(l14);
\draw (v7)--(l15);
\draw (v7)--(l16);
\draw (v8)--(l17);
\draw (v8)--(l18);
\draw (v9)--(l19);
\draw (v9)--(l20);
\end{tikzpicture}
&\quad&
\begin{tikzpicture}[thick,>=stealth,scale=0.25]
\draw(1,0) node[trepp] (l1) {}; 
\draw(2,0) node[trepp] (l2) {}; 
\draw(3,0) node[trepp] (l3) {}; 
\draw(4,0) node[trepp] (l4) {}; 
\draw(5,0) node[trepp] (l5) {}; 
\draw(6,0) node[trepp] (l6) {}; 
\draw(7,0) node[trepp] (l7) {}; 
\draw(8,0) node[trepp] (l8) {}; 
\draw(9,0) node[trepp] (l9) {}; 
\draw(10,0) node[trepp] (l10) {}; 
\draw(11,0) node[trepp] (l11) {}; 
\draw(12,0) node[trepp] (l12) {}; 
\draw(13,0) node[trepp] (l13) {}; 
\draw(14,0) node[trepp] (l14) {}; 
\draw(15,0) node[trepp] (l15) {}; 
\draw(16,0) node[trepp] (l16) {}; 
\draw(17,0) node[trepp] (l17) {}; 
\draw(18,0) node[trepp] (l18) {}; 
\draw(19,0) node[trepp] (l19) {}; 
\draw(20,0) node[trepp] (l20) {}; 
\draw(1.5,1) node[trepp] (v1) {}; 
\draw(3.5,1) node[trepp] (v2) {}; 
\draw(5.5,1) node[trepp] (v3) {}; 
\draw(7.5,1) node[trepp] (v4) {}; 
\draw(9.5,1) node[trepp] (v51) {}; 
\draw(11.5,1) node[trepp] (v5) {}; 
\draw(13.5,1) node[trepp] (v6) {}; 
\draw(15.5,1) node[trepp] (v7) {}; 
\draw(17.5,1) node[trepp] (v8) {}; 
\draw(19.5,1) node[trepp] (v9) {}; 
\draw(2.5,2) node[trepp] (w1) {}; 
\draw(8.5,2) node[trepp] (w2) {}; 
\draw(12.5,2) node[trepp] (w3) {}; 
\draw(18.5,2) node[trepp] (w4) {}; 
\draw(4.5,3) node[trepp] (x1) {}; 
\draw(14.5,3) node[trepp] (x3) {}; 
\draw(5.5,4) node[trepp] (y1) {}; 
\draw(15.5,4) node[trepp] (y2) {}; 
\draw(10.5,5) node[trepp] (r) {}; 
\draw (r)--(y1);
\draw (r)--(y2);
\draw (y1)--(x1);
\draw (y1)--(w2);
\draw (y2)--(x3);
\draw (y2)--(w4);
\draw (x1)--(w1);
\draw (x1)--(v3);
\draw (x3)--(w3);
\draw (x3)--(v7);
\draw (w1)--(v1);
\draw (w1)--(v2);
\draw (w2)--(v4);
\draw (w2)--(v51);
\draw (w3)--(v5);
\draw (w3)--(v6);
\draw (w4)--(v8);
\draw (w4)--(v9);
\draw (v1)--(l1);
\draw (v1)--(l2);
\draw (v2)--(l3);
\draw (v2)--(l4);
\draw (v3)--(l5);
\draw (v3)--(l6);
\draw (v4)--(l7);
\draw (v4)--(l8);
\draw (v51)--(l9);
\draw (v51)--(l10);
\draw (v5)--(l11);
\draw (v5)--(l12);
\draw (v6)--(l13);
\draw (v6)--(l14);
\draw (v7)--(l15);
\draw (v7)--(l16);
\draw (v8)--(l17);
\draw (v8)--(l18);
\draw (v9)--(l19);
\draw (v9)--(l20);
\end{tikzpicture}\\[-0.5ex]
 (5) & & (6)\\[2ex]
\begin{tikzpicture}[thick,>=stealth,scale=0.25]
\draw(1,0) node[trepp] (l1) {}; 
\draw(2,0) node[trepp] (l2) {}; 
\draw(3,0) node[trepp] (l3) {}; 
\draw(4,0) node[trepp] (l4) {}; 
\draw(5,0) node[trepp] (l5) {}; 
\draw(6,0) node[trepp] (l6) {}; 
\draw(7,0) node[trepp] (l7) {}; 
\draw(8,0) node[trepp] (l8) {}; 
\draw(9,0) node[trepp] (l9) {}; 
\draw(10,0) node[trepp] (l10) {}; 
\draw(11,0) node[trepp] (l11) {}; 
\draw(12,0) node[trepp] (l12) {}; 
\draw(13,0) node[trepp] (l13) {}; 
\draw(14,0) node[trepp] (l14) {}; 
\draw(15,0) node[trepp] (l15) {}; 
\draw(16,0) node[trepp] (l16) {}; 
\draw(17,0) node[trepp] (l17) {}; 
\draw(18,0) node[trepp] (l18) {}; 
\draw(19,0) node[trepp] (l19) {}; 
\draw(20,0) node[trepp] (l20) {}; 
\draw(1.5,1) node[trepp] (v1) {}; 
\draw(5.5,1) node[trepp] (v2) {}; 
\draw(7.5,1) node[trepp] (v3) {}; 
\draw(11.5,1) node[trepp] (v4) {}; 
\draw(13.5,1) node[trepp] (v6) {}; 
\draw(15.5,1) node[trepp] (v7) {}; 
\draw(17.5,1) node[trepp] (v8) {}; 
\draw(19.5,1) node[trepp] (v9) {}; 
\draw(2,2) node[trepp] (w1) {}; 
\draw(5,2) node[trepp] (w2) {}; 
\draw(8,2) node[trepp] (w11) {}; 
\draw(11,2) node[trepp] (w21) {}; 
\draw(14.5,2) node[trepp] (w3) {}; 
\draw(18.5,2) node[trepp] (w4) {}; 
\draw(3.5,3) node[trepp] (x1) {}; 
\draw(9.5,3) node[trepp] (x2) {}; 
\draw(6.5,4) node[trepp] (y1) {}; 
\draw(16.5,3) node[trepp] (y2) {}; 
\draw(10.5,5) node[trepp] (r) {}; 
\draw (r)--(y1);
\draw (r)--(y2);
\draw (y1)--(x1);
\draw (y1)--(x2);
\draw (y2)--(w3);
\draw (y2)--(w4);
\draw (x1)--(w1);
\draw (x1)--(w2);
\draw (x2)--(w11);
\draw (x2)--(w21);
\draw (w1)--(v1);
\draw (w1)--(l3);
\draw (w2)--(v2);
\draw (w2)--(l4);
\draw (w11)--(v3);
\draw (w11)--(l9);
\draw (w21)--(v4);
\draw (w21)--(l10);
\draw (w3)--(v6);
\draw (w3)--(v7);
\draw (w4)--(v8);
\draw (w4)--(v9);
\draw (v1)--(l1);
\draw (v1)--(l2);
\draw (v2)--(l5);
\draw (v2)--(l6);
\draw (v3)--(l7);
\draw (v3)--(l8);
\draw (v5)--(l11);
\draw (v5)--(l12);
\draw (v6)--(l13);
\draw (v6)--(l14);
\draw (v7)--(l15);
\draw (v7)--(l16);
\draw (v8)--(l17);
\draw (v8)--(l18);
\draw (v9)--(l19);
\draw (v9)--(l20);
\end{tikzpicture}
&\quad&
\begin{tikzpicture}[thick,>=stealth,scale=0.25]
\draw(1,0) node[trepp] (l1) {}; 
\draw(2,0) node[trepp] (l2) {}; 
\draw(3,0) node[trepp] (l3) {}; 
\draw(4,0) node[trepp] (l4) {}; 
\draw(5,0) node[trepp] (l5) {}; 
\draw(6,0) node[trepp] (l6) {}; 
\draw(7,0) node[trepp] (l7) {}; 
\draw(8,0) node[trepp] (l8) {}; 
\draw(9,0) node[trepp] (l9) {}; 
\draw(10,0) node[trepp] (l10) {}; 
\draw(11,0) node[trepp] (l11) {}; 
\draw(12,0) node[trepp] (l12) {}; 
\draw(13,0) node[trepp] (l13) {}; 
\draw(14,0) node[trepp] (l14) {}; 
\draw(15,0) node[trepp] (l15) {}; 
\draw(16,0) node[trepp] (l16) {}; 
\draw(17,0) node[trepp] (l17) {}; 
\draw(18,0) node[trepp] (l18) {}; 
\draw(19,0) node[trepp] (l19) {}; 
\draw(20,0) node[trepp] (l20) {}; 
\draw(1.5,1) node[trepp] (v1) {}; 
\draw(5.5,1) node[trepp] (v2) {}; 
\draw(7.5,1) node[trepp] (v3) {}; 
\draw(9.5,1) node[trepp] (v4) {}; 
\draw(11.5,1) node[trepp] (v5) {}; 
\draw(13.5,1) node[trepp] (v6) {}; 
\draw(15.5,1) node[trepp] (v7) {}; 
\draw(17.5,1) node[trepp] (v8) {}; 
\draw(19.5,1) node[trepp] (v9) {}; 
\draw(2,2) node[trepp] (w1) {}; 
\draw(5,2) node[trepp] (w2) {}; 
\draw(8.5,2) node[trepp] (w11) {}; 
\draw(14.5,2) node[trepp] (w3) {}; 
\draw(18.5,2) node[trepp] (w4) {}; 
\draw(3.5,3) node[trepp] (x1) {}; 
\draw(9.5,3) node[trepp] (x2) {}; 
\draw(6.5,4) node[trepp] (y1) {}; 
\draw(16.5,3) node[trepp] (y2) {}; 
\draw(10.5,5) node[trepp] (r) {}; 
\draw (r)--(y1);
\draw (r)--(y2);
\draw (y1)--(x1);
\draw (y1)--(x2);
\draw (y2)--(w3);
\draw (y2)--(w4);
\draw (x1)--(w1);
\draw (x1)--(w2);
\draw (x2)--(w11);
\draw (x2)--(v5);
\draw (w1)--(v1);
\draw (w1)--(l3);
\draw (w2)--(v2);
\draw (w2)--(l4);
\draw (w11)--(v3);
\draw (w11)--(v4);
\draw (w3)--(v6);
\draw (w3)--(v7);
\draw (w4)--(v8);
\draw (w4)--(v9);
\draw (v1)--(l1);
\draw (v1)--(l2);
\draw (v2)--(l5);
\draw (v2)--(l6);
\draw (v3)--(l7);
\draw (v3)--(l8);
\draw (v4)--(l9);
\draw (v4)--(l10);
\draw (v5)--(l11);
\draw (v5)--(l12);
\draw (v6)--(l13);
\draw (v6)--(l14);
\draw (v7)--(l15);
\draw (v7)--(l16);
\draw (v8)--(l17);
\draw (v8)--(l18);
\draw (v9)--(l19);
\draw (v9)--(l20);
\end{tikzpicture}\\[-0.5ex]
 (7) & & (8)\\[2ex]
\begin{tikzpicture}[thick,>=stealth,scale=0.25]
\draw(1,0) node[trepp] (l1) {}; 
\draw(2,0) node[trepp] (l2) {}; 
\draw(3,0) node[trepp] (l3) {}; 
\draw(4,0) node[trepp] (l4) {}; 
\draw(5,0) node[trepp] (l5) {}; 
\draw(6,0) node[trepp] (l6) {}; 
\draw(7,0) node[trepp] (l7) {}; 
\draw(8,0) node[trepp] (l8) {}; 
\draw(9,0) node[trepp] (l9) {}; 
\draw(10,0) node[trepp] (l10) {}; 
\draw(11,0) node[trepp] (l11) {}; 
\draw(12,0) node[trepp] (l12) {}; 
\draw(13,0) node[trepp] (l13) {}; 
\draw(14,0) node[trepp] (l14) {}; 
\draw(15,0) node[trepp] (l15) {}; 
\draw(16,0) node[trepp] (l16) {}; 
\draw(17,0) node[trepp] (l17) {}; 
\draw(18,0) node[trepp] (l18) {}; 
\draw(19,0) node[trepp] (l19) {}; 
\draw(20,0) node[trepp] (l20) {}; 
\draw(1.5,1) node[trepp] (v1) {}; 
\draw(3.5,1) node[trepp] (v2) {}; 
\draw(5.5,1) node[trepp] (v3) {}; 
\draw(7.5,1) node[trepp] (v4) {}; 
\draw(9.5,1) node[trepp] (v45) {}; 
\draw(11.5,1) node[trepp] (v5) {}; 
\draw(13.5,1) node[trepp] (v6) {}; 
\draw(15.5,1) node[trepp] (v7) {}; 
\draw(17.5,1) node[trepp] (v8) {}; 
\draw(19.5,1) node[trepp] (v9) {}; 
\draw(2.5,2) node[trepp] (w1) {}; 
\draw(8.5,2) node[trepp] (w11) {}; 
\draw(14.5,2) node[trepp] (w3) {}; 
\draw(18.5,2) node[trepp] (w4) {}; 
\draw(3.5,3) node[trepp] (x1) {}; 
\draw(9.5,3) node[trepp] (x2) {}; 
\draw(6.5,4) node[trepp] (y1) {}; 
\draw(16.5,3) node[trepp] (y2) {}; 
\draw(10.5,5) node[trepp] (r) {}; 
\draw (r)--(y1);
\draw (r)--(y2);
\draw (y1)--(x1);
\draw (y1)--(x2);
\draw (y2)--(w3);
\draw (y2)--(w4);
\draw (x1)--(w1);
\draw (x1)--(v3);
\draw (x2)--(w11);
\draw (x2)--(v5);
\draw (w1)--(v1);
\draw (w1)--(v2);
\draw (w11)--(v4);
\draw (w11)--(v45);
\draw (w3)--(v6);
\draw (w3)--(v7);
\draw (w4)--(v8);
\draw (w4)--(v9);
\draw (v1)--(l1);
\draw (v1)--(l2);
\draw (v2)--(l3);
\draw (v2)--(l4);
\draw (v3)--(l5);
\draw (v3)--(l6);
\draw (v4)--(l7);
\draw (v4)--(l8);
\draw (v45)--(l9);
\draw (v45)--(l10);
\draw (v5)--(l11);
\draw (v5)--(l12);
\draw (v6)--(l13);
\draw (v6)--(l14);
\draw (v7)--(l15);
\draw (v7)--(l16);
\draw (v8)--(l17);
\draw (v8)--(l18);
\draw (v9)--(l19);
\draw (v9)--(l20);
\end{tikzpicture}
&\quad&
\begin{tikzpicture}[thick,>=stealth,scale=0.25]
\draw(1,0) node[trepp] (l1) {}; 
\draw(2,0) node[trepp] (l2) {}; 
\draw(3,0) node[trepp] (l3) {}; 
\draw(4,0) node[trepp] (l4) {}; 
\draw(5,0) node[trepp] (l5) {}; 
\draw(6,0) node[trepp] (l6) {}; 
\draw(7,0) node[trepp] (l7) {}; 
\draw(8,0) node[trepp] (l8) {}; 
\draw(9,0) node[trepp] (l9) {}; 
\draw(10,0) node[trepp] (l10) {}; 
\draw(11,0) node[trepp] (l11) {}; 
\draw(12,0) node[trepp] (l12) {}; 
\draw(13,0) node[trepp] (l13) {}; 
\draw(14,0) node[trepp] (l14) {}; 
\draw(15,0) node[trepp] (l15) {}; 
\draw(16,0) node[trepp] (l16) {}; 
\draw(17,0) node[trepp] (l17) {}; 
\draw(18,0) node[trepp] (l18) {}; 
\draw(19,0) node[trepp] (l19) {}; 
\draw(20,0) node[trepp] (l20) {}; 
\draw(1.5,1) node[trepp] (v1) {}; 
\draw(3.5,1) node[trepp] (v2) {}; 
\draw(5.5,1) node[trepp] (v3) {}; 
\draw(7.5,1) node[trepp] (v4) {}; 
\draw(9.5,1) node[trepp] (v45) {}; 
\draw(11.5,1) node[trepp] (v5) {}; 
\draw(13.5,1) node[trepp] (v6) {}; 
\draw(15.5,1) node[trepp] (v7) {}; 
\draw(17.5,1) node[trepp] (v8) {}; 
\draw(19.5,1) node[trepp] (v9) {}; 
\draw(2.5,2) node[trepp] (w1) {}; 
\draw(6.5,2) node[trepp] (w2) {}; 
\draw(10.5,2) node[trepp] (w11) {}; 
\draw(14.5,2) node[trepp] (w3) {}; 
\draw(18.5,2) node[trepp] (w4) {}; 
\draw(4.5,3) node[trepp] (x1) {}; 
\draw(6.5,4) node[trepp] (y1) {}; 
\draw(16.5,3) node[trepp] (y2) {}; 
\draw(10.5,5) node[trepp] (r) {}; 
\draw (r)--(y1);
\draw (r)--(y2);
\draw (y1)--(x1);
\draw (y1)--(w11);
\draw (y2)--(w3);
\draw (y2)--(w4);
\draw (x1)--(w1);
\draw (x1)--(w2);
\draw (w1)--(v1);
\draw (w1)--(v2);
\draw (w2)--(v3);
\draw (w2)--(v4);
\draw (w11)--(v45);
\draw (w11)--(v5);
\draw (w3)--(v6);
\draw (w3)--(v7);
\draw (w4)--(v8);
\draw (w4)--(v9);
\draw (v1)--(l1);
\draw (v1)--(l2);
\draw (v2)--(l3);
\draw (v2)--(l4);
\draw (v3)--(l5);
\draw (v3)--(l6);
\draw (v4)--(l7);
\draw (v4)--(l8);
\draw (v45)--(l9);
\draw (v45)--(l10);
\draw (v5)--(l11);
\draw (v5)--(l12);
\draw (v6)--(l13);
\draw (v6)--(l14);
\draw (v7)--(l15);
\draw (v7)--(l16);
\draw (v8)--(l17);
\draw (v8)--(l18);
\draw (v9)--(l19);
\draw (v9)--(l20);
\end{tikzpicture}\\[-0.5ex]
 (9) & & (10)
 \end{tabular}
\end{center}
\caption{\label{fig:1020} The 10 trees in $\TT_{20}$ with minimum Colles index, 8. They are enumerated in the same order as they have been produced in Example \ref{ex:2}.}
\end{figure}

We have implemented the Algorithm MinColless, with step (2.b.i) efficiently carried out by means of Proposition \ref{cor:QB}, in a Python script that generates, for every $n$, the Newick description of all bifurcating trees in $\TT_n$ with minimum Colles index. It is available at the GitHub repository \url{https://github.com/biocom-uib/Colless}. As a proof of concept, we have computed for every $n$ from 4 to $2^6=128$ all such bifurcating trees in $\TT_n$. Figure \ref{fig:1} shows their number for every $n$. 

\begin{figure}[htb]
\begin{center}
\includegraphics[width=0.9\linewidth]{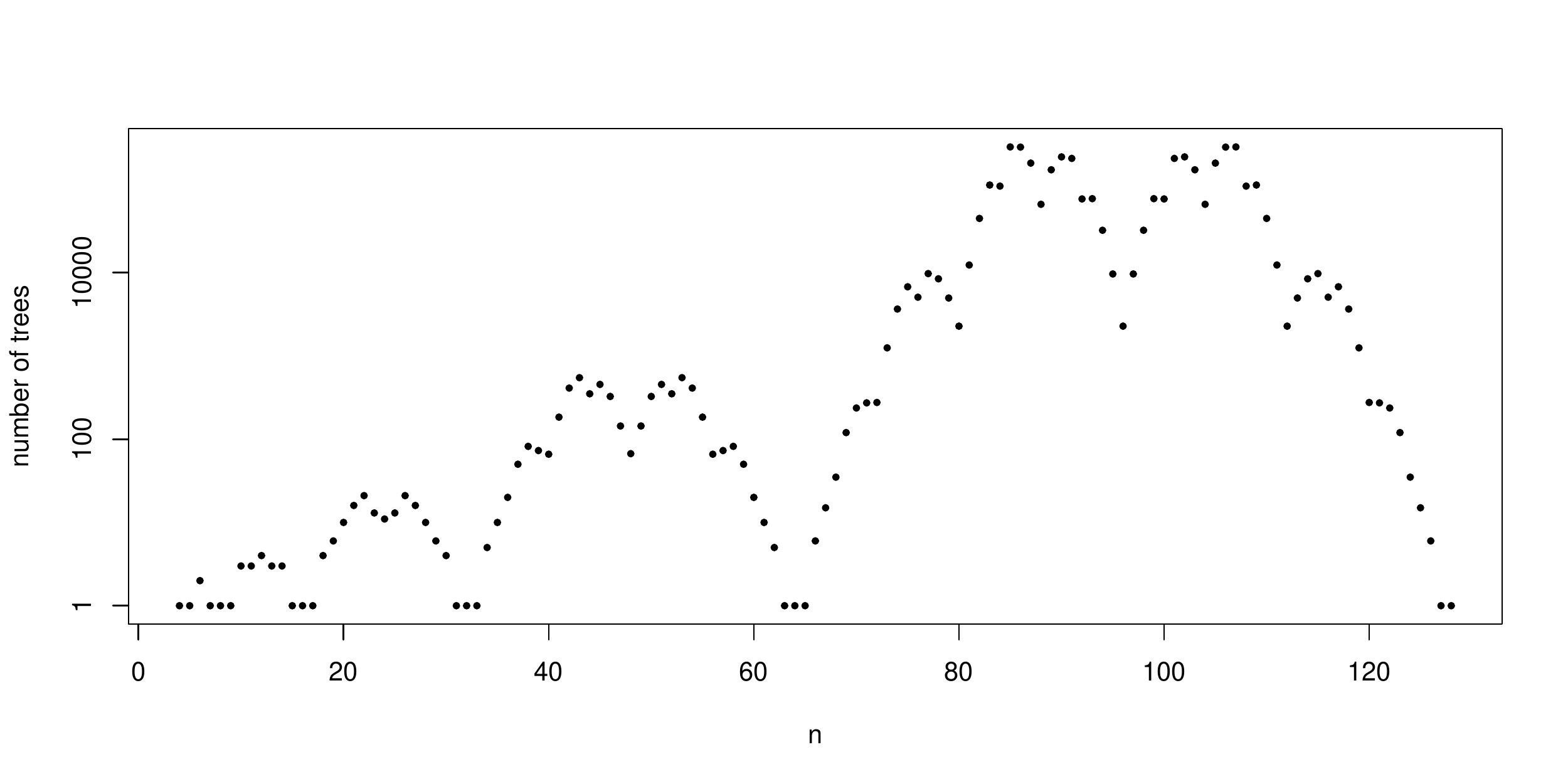}
\end{center}
\caption{\label{fig:1} 
Scatterplot of of the number of trees in $\TT_n$ with minimum Colless index, for $n=4,\ldots,128$.}
\end{figure}

\section{Conclusions}

The Colless index $C(T)$ of a bifurcating phylogenetic tree $T$ measures the total amount of imbalance of $T$, and it is one of the oldest and most popular balance indices for rooted bifurcating phylogenetic trees. But, despite its popularity, neither its minimum value for any given number of leaves nor the trees where this minimum value is reached were known so far. This paper fills this gap in the literature, with two main contributions. 

First, we have proved  that, for every number $n$ of leaves, the minimum Colless index of a bifurcating phylogenetic tree with $n$ leaves is reached at the maximally balanced bifurcating trees and we have provided an explicit formula for its vale $C(n)$. Knowing this minimum value, as well as its maximum value, which is reached at the combs and it is  $(n-1)(n-2)/2$, allows one to normalize the Colless index so that its range becomes the unit interval $[0,1]$, by means of the usual affine transformation
$$
\widetilde{C}(T)=\frac{C(T)-C(n)}{\frac{(n-1)(n-2)}{2}-C(n)}.
$$
This normalized index allows the comparison of the balance of trees with different numbers of leaves, which cannot be done directly with the unnormalized Colles index $C$ because its value tends to grow with $n$. 

The fact that the maximally balanced bifurcating trees have the minimum Colless index for their  number of leaves is not surprising, because, in words of \citet{Shao:90}, they are considered  the  ``most balanced'' bifurcating trees. But it turns out that for almost all values of  $n$ there are bifurcating trees with $n$ leaves that are not maximally balanced but whose Colless index is also minimal. So, our second main contribution is an alternative characterization of these trees and an efficient algorithm to produce all of them for any number $n$ of leaves. Notice that, in spite of not being considered the  ``most balanced'' ones because they have internal nodes whose imbalance is not minimal, the fact is that these trees also have  the minimum total amount of imbalance. So, the total imbalance of a phylogenetic tree does not capture the local imbalance at each internal node. 

The Colless index shares the drawback of classifying as ``most balanced'' many bifurcating trees that are not maximally balanced  with the other most popular balance index, the Sackin index \citep{Fischer19,Sackin:72,Shao:90}. This problem can be avoided by also using other balance indices,  like the total cophenetic index \citep{MRR1} or the rooted quartet index \citep{CMR}. Let us recall in this connection that  \citet{Shao:90} already advised to use more than one balance index to quantify tree balance.

We find remarkable the regularities displayed by the sequence of numbers of trees in $\TT_n$ with Colless index $C(n)$ hinted in Fig. \ref{fig:1} and that continue for larger values of $n$. We plan to study in  a future paper the properties of this sequence.\medskip

\noindent\textbf{Acknowledgements.}
 This research was partially supported by the Spanish Ministry of Economy and Competitiveness and the European Regional Development Fund through projects DPI2015-67082-P and PGC2018-096956-B-C43 (MINECO/FEDER).


%

\end{document}